\documentclass{article}
\usepackage[utf8]{inputenc}
\usepackage{amsthm, amssymb, amsmath, thmtools}
\usepackage{url}
\usepackage[toc,page]{appendix}
\usepackage[numbers,sort&compress]{natbib}
\usepackage{geometry}
\usepackage{caption}
\usepackage{subcaption}
\usepackage{float}
\usepackage{array}
\usepackage{enumitem}
\usepackage{footnote}
\makesavenoteenv{tabular}
\makesavenoteenv{table}

\usepackage{authblk}

\usepackage{xcolor}
\geometry{verbose,tmargin=2cm,bmargin=2cm,lmargin=2cm,rmargin=2cm}
\newcolumntype{P}[1]{>{\centering\arraybackslash}p{#1}}
\usepackage{graphicx}
\graphicspath{ {./photos/} }
\usepackage{hyperref}

\begin{document}

\newtheorem{theorem}{Theorem}
\newtheorem{lemma}[theorem]{Lemma}

\title{Robustness of Bell Violation of Graph States to Qubit Loss}

\author{Shahar Silberstein}
\author{Rotem Arnon-Friedman}
\affil{The Center for Quantum Science and Technology, Department of Physics of Complex Systems,
Weizmann Institute of Science,
Rehovot 76100, Israel}

\maketitle

\begin{abstract}
Graph states are special entangled states advantageous for many quantum technologies, including quantum error correction, multiparty quantum communication and measurement-based quantum computation. Yet, their fidelity is often disrupted by various errors, most notably qubit loss. 
In general, given an entangled state, Bell inequalities can be used to certify whether quantum entanglement remains despite errors. 
Here we study the robustness of graph states to loss in terms of their Bell violation. Treating the recently proposed linearly scalable Bell operators by Baccari \textit{et al.}, we use the stabilizer formalism to derive a formula for the extent by which the Bell violation of a given graph state is decreased with qubit loss. Our analysis allows to determine which graph topologies are tolerable to qubit loss as well as pinpointing the Achilles' heel of each graph, namely the sets of qubits whose loss jeopardizes the Bell violation. These results can serve as an analytical tool for optimizing experiments and protocols involving graph states in real-life systems.
\end{abstract}

\section{Introduction}



    Graph states \citep{Hein_2004} are special entangled states that can be characterized by a  graph in which each vertex represents a qubit. They are defined to satisfy a set of ``stabilizing constraints'' applying to each qubit and its neighbors. The states are useful for many quantum tasks and technologies, ranging from quantum error correction \citep{Schlingemann_2002, Looi_2008, Liao_2022}, to multiparty cryptographic protocols \citep{Bell_2014_2, Hein_2006}, to all-photonic quantum repeaters \citep{Azuma_2015} and computers \citep{Raussendorf_2003}.
    As they hold such great importance, a vast amount of research is devoted to investigating their theoretical aspects and creating them in laboratories using different quantum technologies \citep{Kay_2006, Zhang_2006, Lu_2007, Bell_2014_1, Adcock_2019, Pick_2021, Mooney_2019}.
    
    As graph states are entangled states, there exist Bell inequalities that can be violated by them \citep{Gisin_1992}. The violation gives a very strong form of ``certification'' of the states \citep{Supic_2018}. 
    These facts allow us to use graph states as a quantum resource in the so called \emph{device-independent} (DI) setting. 
    In the DI paradigm~\cite{ekert2014ultimate} one wishes to certify that an unknown quantum state is sufficiently good in order to accomplish a given task (e.g., the production of randomness~\cite{Colbeck09}). This is done by measuring the state using uncharacterized devices and checking whether the resulting classical information, namely, the measurement outcomes, can be used to violate a Bell inequality~\cite{brunner2014bell,scarani2019bell}. 
    
    In recent years, several Bell inequalities were developed specifically for graph states \citep{Scarani_2005, Guhne_2005, Toth_2006, Baccari_2020, Santos_2022}. The inequalities differ in the number of prescribed measurement settings, 
    the maximal violation ratio with respect to the classical bound and more. In this work we are interested in the robustness of the Bell violation of the inequalities to the \emph{loss of qubits}. A loss of qubits is the most common source of error in many setups. Most drastically, attenuation causes loss when photonic qubits are being sent over a quantum communication channel \citep{Yin_2016, Chen_2020, Czerwinski_2022}. 
    
    Another reason to consider loss comes from the field of multipartite computation and cryptography \citep{Thalacker_2021}. Consider a setup in which a graph state is being used as a resource for some task where, say, each entity holds one qubit of the graph state. 
    If one of the parties decides to stop participating in the protocol, this results, effectively, in the loss of a qubit. Ideally, one would like the protocol to be robust to this loss so the task can be accomplished regardless.
    
    Some studies regarding the effects of loss of qubits on graph states in the context of error correction have been performed, e.g., \citep{Stace_2009_a, Stace_2009_b, Bell_2022}. We stress that our work considers the \emph{DI} setup, where the initial state, the set of possible measurements and the errors are unknown. Hence, the more demanding Bell violation is asked for, as apposed to having an error model and an error correcting code that can be used to handle qubit losses. \citep{Noller_2023} discussed the implication for specific types of hyper-graphs. Here we provide a general theorem for the violation of a lineally scalable Bell inequality \citep{Baccari_2020} which can be applied to any graph state. Apart from the fundamental aspects, our findings can be used to construct loss-tolerant DI protocols that use graph states. 
    
    Our main goal is to understand how the violation of the Bell inequality decreases when some qubits of a given graph state are lost (see Section~\ref{results_section}). In addition, we show that for any graph state, when a qubit from the set of roots is lost there is no violation. 
    We present several examples for \emph{loss-tolerant} (Section~\ref{loss_tolerant_section}) and \emph{loss-sensitive} graph states (Section~\ref{loss_sensitive_section}) and find that a redundancy of roots is beneficial to construct loss-tolerant graph states.
    In each example, we indicate which qubits are ``more important'' than others and how many qubits can be lost while still allowing to observe a violation of the Bell inequalities (see Table~\ref{table:conditions_for_violation} for a summary). Lastly, we explore the implication of the possibility of the set of lost qubits being unknown (Section~\ref{unkown_set_of_lost_qubits_section}).

    On top of the theoretical contribution, our results and examples are of relevance for experiments. 
    Firstly, if loss is to be expected, one should clearly aim to create loss-tolerant graph states; our results show how many qubits can be lost for a given state while still presenting a Bell violation.  
    Then, since some qubits are crucial for the violation, it is  wise to treat them differently in an experiment. For example, have them as, e.g., atomic qubits or be measured right away, while others can be sent as photons.
    Thus, our work lays out important facts for consideration when developing experiments that produce graph states.

\section{Preliminaries}
\subsection{Bell Scenarios}
We consider \(\left(N,2,2\right)\) Bell Scenarios, which can be thought of as “cooperative games” with~$N$ participating parties. Each participant \(i\) receives an input \(x_{i}\in\mathcal{I}_{i}=\left\{ 0,1\right\}\) sampled from some probability distribution \(q\left(\left\{ x_{i}\right\}_{i=1}^{N}\right)\), and produces an output \(y_{i}\in\mathcal{O}_{i}=\left\{ -1,1\right\}\). The score of the game is defined via a function \(v\left(\left\{ x_{i}\right\} _{i=1}^{N},\left\{ y_{i}\right\} _{i=1}^{N}\right)\).
The behavior of the parties can be described by some conditional probability distribution \(P\left(\left\{y_{i}\right\}_{i=1}^{N} \Big| \left\{x_{i}\right\}_{i=1}^{N}\right)\), which is also called the \textit{strategy} of the game.

The score of a strategy is given by 

\begin{align}
    I\left(P\right)	=\sum_{\left\{ x_{i}\right\} _{i=1}^{N},\left\{ y_{i}\right\} _{i=1}^{N}}q\left(\left\{ x_{i}\right\} _{i=1}^{N}\right) P\left(\left\{y_{i}\right\}_{i=1}^{N} \Big| \left\{x_{i}\right\}_{i=1}^{N}\right) v\left(\left\{ x_{i}\right\} _{i=1}^{N},\left\{ y_{i}\right\} _{i=1}^{N}\right)
	\;.
\end{align}

A strategy is called \textit{local}, or classical, if there exists a family of local processes \(\forall \lambda : \left\{ P_{1}^{\lambda}\left(y\mid x\right),\dots,P_{N}^{\lambda}\left(y\mid x\right)\right\}\) and a distribution \(Q\left(\lambda\right)\) such that 
	
\begin{align}
	P\left(\left\{y_{i}\right\}_{i=1}^{N} \Big| \left\{x_{i}\right\}_{i=1}^{N}\right) = \int d\lambda Q\left(\lambda\right)\Pi_{i=1}^{N}P_{i}^{\lambda}\left(y_{i}\mid x_{i}\right) 
	\;.
\end{align}
A strategy is called \textit{deterministic} if the set of outputs is predetermined for any set of inputs, i.e
	\begin{align}
		P\left(\left\{y_{i}\right\}_{i=1}^{N} \Big| \left\{x_{i}\right\}_{i=1}^{N}\right) = \delta_{\left(y_{1},\dots,y_{n} \right) = f\left(x_{1},\dots,x_{n}\right)}
		\;,
	\end{align}
where \(f\) is some function of the inputs. Hence, a strategy is said to be \textit{local deterministic} if it relies on a predetermined agreement of the outputs to each input separately. According to Fine's theorem \citep{Fine_1982}, any local strategy can be written as a convex combination of local deterministic strategies. The classical bound \(\beta_{\text{C}}\) of the game is defined as the maximal average score a local strategy can achieve. 

In Bell scenarios, the strategies which involve \textit{quantum resources} can achieve better scores. A strategy which uses a quantum resource in this context is described by a quantum density matrix \(\rho\) acting on  \(\bigotimes_{i=1}^{N}\mathcal{H}_{i}\) where \(\left\{ \mathcal{H}_{i}\right\} _{i=1}^{N}\) are Hilbert spaces, and families of measurements
\begin{align}
 \forall i,\forall x_{i}\in\mathcal{I}_{i}:\left\{ \Pi_{i}\left(x_{i}, y_{i}\right)|y_{i}\in\mathcal{O}_{i}\right\}   ,\sum_{y_{i}\in\mathcal{O}_{i}}\Pi_{i}\left(x_{i} ,y_{i}\right)=\mathbb{I}_{i} 
 \;,
\end{align}
such that
\begin{align}
	P\left(\left\{y_{i}\right\}_{i=1}^{N} \Big| \left\{x_{i}\right\}_{i=1}^{N}\right)=\text{Tr}\left(\rho\bigotimes_{i=1}^{N}\Pi_{i}\left(x_{i},y_{i}\right)\right)
	\;.
\end{align}

The quantum bound of the game \(\beta_{\text{Q}}\) is the maximal average score a quantum strategy can achieve. In a Bell scenario \(\beta_{\text{Q}}\geq\beta_{\text{C}}\).

A correlator of the outputs is a function of the form 
\begin{align}
    \label{correlator_of_inputs}
    \left\langle y_{i_{1}}\cdot\cdot\cdot y_{i_{k}} \right\rangle \left(\left\{ x_{i}\right\} _{i=1}^{N} \right)= 
    \sum_{\left\{ y_{i}\right\}_{i=1}^{N}} y_{i_{1}}\cdot\cdot\cdot y_{i_{k}} P\left(\left\{y_{i}\right\}_{i=1}^{N} \Big| \left\{x_{i}\right\}_{i=1}^{N}\right)
    \;,
\end{align}
where \(\left\{ i_{1},\dots,i_{k}\right\} \subseteq\left\{ 1,\dots,N\right\}\). In many cases, \(I\left(P\right)\) is a linear combination of such correlators. For a quantum strategy Equation~\eqref{correlator_of_inputs} can be written as an operator expectation value

\begin{align}
    \left\langle y_{i_{1}}\cdot\cdot\cdot y_{i_{k}} \right\rangle \left(\left\{ x_{i}\right\} _{i=1}^{N} \right) = \left\langle  Y_{i_{1}}\left(x_{i_{1}}\right)\dots Y_{i_{k}}\left(x_{i_{k}}\right) \right\rangle
    \;,
\end{align}
where \(Y_{i} \left(x_{i}\right) =  \sum_{y_{i} \in\mathcal {O}_{i}}y_{i}\cdot\Pi_{i}\left(x_{i},y_{i}\right)\).

\subsection{Graph States}\label{sec:pre_graph}
The \(\left(N,2,2\right)\) Bell inequalities examined in this paper are maximally violated by N-qubit states called graph states, which are defined as follows.
Let \(G=\left(V,E\right)\) be a graph, where \(V=\left\{v_{1},\dots,v_{N}\right\}\) is the set of vertices (\(\left|V\right|=N\)) and 
\(E\) is the set of edges. Each of the qubits in the system is associated with a vertex in \(V\). Denote by \(\mathcal{N}_{i}^{G}\) the neighborhood of the vertex \(v_{i}\), i.e., \(\mathcal{N}_{i}^{G} = \left\{j: \left(v_{i},v_{j}\right)\in E \right\}\) and define \(\overline{\mathcal{N}_{i}^{G}} \equiv  \mathcal{N}_{i}^{G} \cup \left\{i\right\}\). To every vertex \(v_{i}\in V\) associate a stabilizing operator
	
\begin{align}
	\label{eqn:graph_stablizing_operators}
	S_{i}^{G} \equiv X_{i} \bigotimes_{j\in \mathcal{N}_{i}^{G}} Z_{j}
	\;,
\end{align}
where the identity operator acts on all sites not appearing explicitly. Note, that the operators \(\left\{S_{i}^{G}\right\}^{N}_{i=1}\) commute with one another and therefore can be diagonalized simultaneously. The graph state \(\left|\phi^{G}\right\rangle\) associated with the graph \(G\) is an N-qubit state defined as the unique\footnote{Generally, for a stabilizer code with \(n\) qubits and \(k\) stabilizers, the corresponding code space has \(2^{n-k}\) states. This is proved by induction by adding one stabilizer at a time to the stabilizer group and showing that with each addition the number of eigenstates with eigenvalue \(1\) for all stabilizers is cut in half. For more information see \citep{Gottesman_1997}.} eigenstate of all the stabilizers with eigenvalue \(1\), i.e
\begin{align}	
	\label{eqn:eigenvalue_equation}
	\forall i\in \left\{1,\dots,N\right\} :  S_{i}^{G}\left|\phi^{G}\right\rangle = \left|\phi^{G}\right\rangle
	\;.
\end{align}

A node with the maximal number of neighbors \(n_{\text{max}}\) is called a \textit{root}. Denote without loss of generality one of the roots by the index~\(r\).
Define the following Bell operator maximally violated by graph states, proposed in  \citep{Baccari_2020} 

\begin{align}
\label{eqn:bell_operator_sccarni}
    I_{r}^{G}  \left[\left\{Y_{i}\left(0\right)\right\}_{i=1}^{N},\left\{Y_{i}\left(1\right)\right\}_{i=1}^{N}\right] &= n_{\text{max}}\left[Y_{r}\left(0\right) + Y_{r}\left(1\right)\right]
	\bigotimes_{i\in\mathcal{N}_{r}^{G}} Y_{i}\left(1\right) \nonumber\\
	& + \left[Y_{r}\left(0\right) - Y_{r}\left(1\right)\right]\sum_{i\in\mathcal{N}_{r}^{G}}
	Y_{i}\left(0\right)\bigotimes_{j\in\mathcal{N}_{i}^{G}\setminus\left\{r\right\}} Y_{j}\left(1\right) \nonumber\\
	& + \sum_{i\notin\overline{\mathcal{N}_{r}^{G}}}
	Y_{i}\left(0\right)\bigotimes_{j\in\mathcal{N}_{i}^{G}} 
	Y_{j}\left(1\right)
	\;.
\end{align}
An example of a graph with two possible root vertices and \(n_{\text{max}} = 4\) is shown in Figure~\ref{fig:graph_example}.

\begin{figure}[h]
	\centering
	\begin{subfigure}[b]{6cm}
    	\centering
    	\includegraphics[height=3cm]{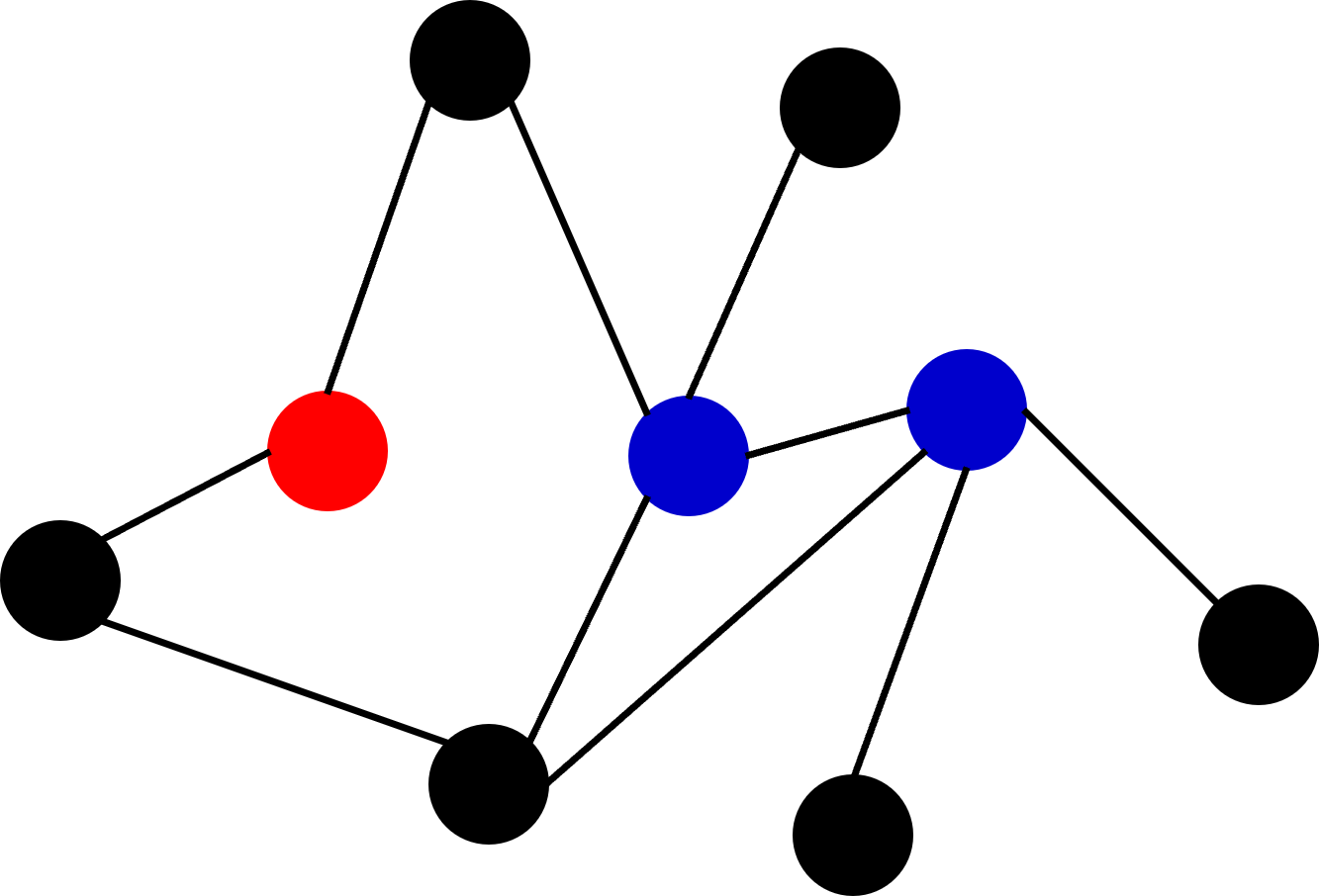}
    	\caption{}
    	\label{fig:graph_example}
	\end{subfigure}
	\quad
	\begin{subfigure}[b]{6cm}
	    \centering
		\includegraphics[height=3cm]{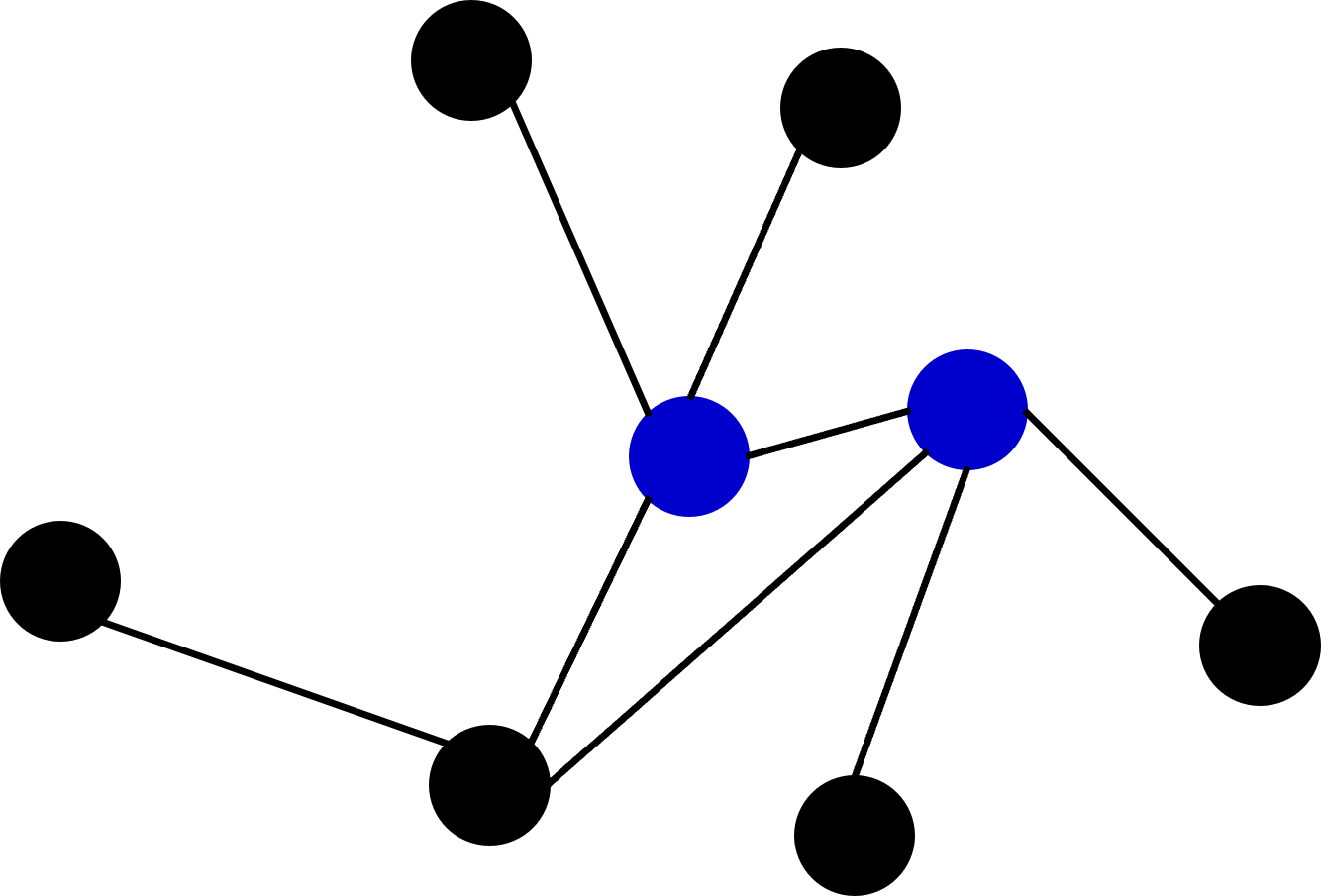}
		\caption{}
		\label{fig:induced_graph}
	\end{subfigure}
	\caption{(\ref{fig:graph_example})~A graph with two vertices with the maximal number of neighbors, which are denoted in blue. (\ref{fig:induced_graph})~The induced graph after eliminating the vertex colored in red in Figure \ref{fig:graph_example}.}
\end{figure}

It is shown that this Bell operator has the following classical bound 

\begin{align}
	\beta_{\text{C}}=n_{\text{max}}+N-1
	\;,
\end{align}
and the maximal quantum violation is
\begin{align}
\label{quantum_bound}
	\beta_{\text{Q}}=\left(2\sqrt{2}-1\right)n_{\text{max}}+N-1
	\;.
\end{align}
 The maximal expectation value in Equation~\eqref{quantum_bound} can be achieved by using the following measurements and the graph state
\begin{align}
\label{eqn:measurments}
	Y_{i}\left(0\right) = 
	\begin{cases}
		\frac{1}{\sqrt{2}}\left(X_{i}+Z_{i}\right) & i = r \\
		X_{i} & \text{otherwise}
	\end{cases} && 
	Y_{i}\left(1\right) =
	\begin{cases}
		\frac{1}{\sqrt{2}}\left(X_{i}-Z_{i}\right) & i = r \\
		Z_{i} & \text{otherwise}
	\end{cases}
	\;.
\end{align}

In this case\footnote{The * in \(I_{r}^{G*}\) in Equation~\eqref{eqn:bell_operator_as_a_sum_of_stabilizing_operators} stands for the specific choice of measurements described in Equation~\eqref{eqn:measurments}}
\begin{align}
	\label{eqn:bell_operator_as_a_sum_of_stabilizing_operators}
	I_{r}^{G*}  &= \sqrt{2}n_{\text{max}}X_{r}
	\bigotimes_{i\in\mathcal{N}_{r}^{G}} Z_{i} \nonumber\\
	& + \sqrt{2}Z_{r} \sum_{i\in\mathcal{N}_{r}^{G}}
	X_{i}
	\bigotimes_{j\in\mathcal{N}_{i}^{G}\setminus\left\{r\right\}} Z_{j} \nonumber\\
	& + \sum_{i\notin \overline{\mathcal{N}_{r}^{G}}}
	X_{i} \bigotimes_{j\in\mathcal{N}_{i}^{G}} 
	Z_{j} \nonumber\\ 
	& = 
	 \sqrt{2}n_{\text{max}}S_{r}^{G}
	+ \sqrt{2}\sum_{i\in\mathcal{N}_{r}^{G}} S_{i}^{G} + 
	\sum_{i\notin\overline{\mathcal{N}_{r}^{G}}} S_{i}^{G}
	\;.
\end{align}

Notice that the above is a sum of the stabilizing operators associated to each vertex of the graph as in Equation~\eqref{eqn:graph_stablizing_operators}. For each one of the operators the corresponding graph state has eigenvalue \(1\) (This is part of the reason for which this Bell operator is maximally violated by graph states). Other examples of Bell inequalities violated by graph states were shown in \citep{Scarani_2005}.

Given a graph \(G=\left(V,E\right)\), let \(V' \subset V\) be any subset of vertices of \(G\). The \textit{induced subgraph} \(G\left[V'\right]\) is the graph whose vertex set is \(V'\) and whose edge set consists of all of the edges in \(E\) that have both endpoints in \(V'\). An example of an induced graph of the graph in Figure~\ref{fig:graph_example} is shown in Figure~\ref{fig:induced_graph}, where an excluded vertex is marked in red. In this paper we discuss the effect of loss of qubits on the violation of the Bell operators. When qubits are lost, it is natural to examine the violation of the Bell operator corresponding to the original graph or to the induced subgraph defined above. 

\section{Results}\label{sec:results}

We now consider the effect of loss of qubits on the possibility of violating the Bell inequalities presented in Section~\ref{sec:pre_graph}. The loss-tolerance is determined by the decrease of the violation when losing qubits of the ideal graph state and while using the ideal measurements.

\subsection{Graph States' Violation of Bell Inequalities after Qubit Loss}
\label{results_section}

The quantitative effect of tracing out qubits on the expectation value of the operator in Equation~\eqref{eqn:bell_operator_as_a_sum_of_stabilizing_operators}, for both the original graph and the induced graph, is the content of Theorem~\ref{expectation_value_theorem}.

\begin{theorem}
\label{expectation_value_theorem}
    Let \(\left|\phi^{G}\right\rangle\) be a graph state defined on the graph \(G=\left(V,E\right)\), (\(\left|V\right|=N\)). Let \(\mathcal{L}\) be a set of indices \(\mathcal{L} \subseteq \left\{ 1,\dots,N \right\}\) of the lost qubits. Define the density matrix after qubit loss\footnote{The choice of representing the lost qubits by replacing them with \(\bigotimes_{l \in \mathcal{L}} \left|0\right\rangle_{l}\left\langle0\right|_{l}\) is arbitrary, the same results would have been obtained by replacing the lost qubit set with \(\bigotimes_{l \in \mathcal{L}} \left|1\right\rangle_{l}\left\langle1\right|_{l}\) or the totally mixed state.}
     \begin{align}
	    \label{eqn:remaining_density_matrix}
		\rho = \mathrm{Tr}_{\mathcal{L}}\left( \left|\phi^{G}\right\rangle \left\langle \phi^{G}\right| \right) \bigotimes_{l \in \mathcal{L}} \left|0\right\rangle_{l}\left\langle0\right|_{l} \;.
  \end{align}
  Denote by  \(n_{\text{max}}\) the maximal number of neighbors of a vertex in the graph and by \(r\) an index of a vertex with the maximal number of neighbors such that \(r\notin\mathcal{L}\).\footnote{For \(r\in\mathcal{L}\) the equation holds for \(I_{r}^{G*}\).} 
  
  Define the sets \(\mathcal{W}^{G} = \mathcal{N}_{r}^{G} \setminus  \bigcup_{l \in \mathcal{L}} \overline{\mathcal{N}_{l}^{G}} \) and \( \mathcal{T}^{G} = \left\{1,\dots,N\right\} \setminus \left( \bigcup_{l \in \mathcal{L}} \overline{\mathcal{N}_{l}^{G}} \cup \overline{\mathcal{N}_{r}^{G}}\right)\).
	Then,
	\begin{align}
		\label{Theorem_content}
		\left\langle I_{r}^{G*} \right\rangle_{\rho} = \left\langle I_{r}^{G\left[V\setminus \left\{ v_{l} \text{ s.t } l\in \mathcal{L}\right\}\right]*} \right\rangle_{\rho} = 
		\begin{cases}
			\sqrt{2}n_{\text{max}}+\sqrt{2}\left|\mathcal{W}^{G}\right|+\left|\mathcal{T}^{G}\right| &  \overline{\mathcal{N}_{r}^{G}}  \cap \mathcal{L} = \emptyset \\
			\sqrt{2}\left|\mathcal{W}^{G}\right|+\left|\mathcal{T}^{G}\right| &  \text{otherwise}
		\end{cases}
		\;,
	\end{align}
	
	where \(I_{r}^{G*}, I_{r}^{G\left[V\setminus \left\{ v_{l} \text{ s.t } l\in \mathcal{L}\right\}\right]*}\) are defined via Equation~\eqref{eqn:bell_operator_as_a_sum_of_stabilizing_operators}.
\end{theorem}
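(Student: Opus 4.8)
The plan is to exploit the decomposition of the Bell operator in Equation~\eqref{eqn:bell_operator_as_a_sum_of_stabilizing_operators} as a weighted sum of stabilizers and, by linearity of the expectation value, reduce everything to evaluating a single quantity, $\left\langle S_i^G \right\rangle_\rho$, for each vertex $v_i$ of $G$ and the post-loss state $\rho$ of Equation~\eqref{eqn:remaining_density_matrix}. Concretely, I would first establish the elementary lemma
\[
	\left\langle S_i^G \right\rangle_\rho = \begin{cases} 1 & \overline{\mathcal{N}_i^G}\cap\mathcal{L}=\emptyset,\\ 0 & \text{otherwise,}\end{cases}
\]
after which the theorem is pure bookkeeping.

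To prove the lemma I would note that both $\rho$ and $S_i^G = X_i\bigotimes_{j\in\mathcal{N}_i^G}Z_j$ factor across the bipartition $\left(\left\{1,\dots,N\right\}\setminus\mathcal{L}\right)\big/\mathcal{L}$, so $\left\langle S_i^G\right\rangle_\rho$ splits into the product of a retained-qubit factor $\left\langle \phi^G\right|\big(\,S_i^G\text{ with every factor on }\mathcal{L}\text{ replaced by }\mathbb{I}\,\big)\left|\phi^G\right\rangle$ with $\prod_{l\in\mathcal{L}}\left\langle 0\right|(S_i^G)_l\left|0\right\rangle$, where $(S_i^G)_l\in\{X,Z,\mathbb{I}\}$. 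If $i\in\mathcal{L}$ then $(S_i^G)_i=X$ and $\left\langle 0\right|X\left|0\right\rangle=0$ kills the product, which matches the ``$0$'' case since then $i\in\overline{\mathcal{N}_i^G}\cap\mathcal{L}$. Otherwise each $\mathcal{L}$-factor is $Z$ or $\mathbb{I}$ with $\left\langle 0\right|Z\left|0\right\rangle=\left\langle 0\right|\mathbb{I}\left|0\right\rangle=1$, so $\left\langle S_i^G\right\rangle_\rho = \left\langle\phi^G\right| X_i\bigotimes_{j\in\mathcal{N}_i^G\setminus\mathcal{L}}Z_j\left|\phi^G\right\rangle$; using $S_i^G\left|\phi^G\right\rangle=\left|\phi^G\right\rangle$ to delete the $Z$'s on retained qubits, this equals $\left\langle\phi^G\right|\bigotimes_{j\in\mathcal{N}_i^G\cap\mathcal{L}}Z_j\left|\phi^G\right\rangle$, which is $1$ when $\mathcal{N}_i^G\cap\mathcal{L}=\emptyset$ and $0$ otherwise. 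The last fact is the one genuinely structural input: writing $\left|\phi^G\right\rangle\left\langle\phi^G\right|=2^{-N}\sum_{g\in\mathcal{S}}g$ over the stabilizer group $\mathcal{S}$, a Pauli expectation is nonzero only if that Pauli lies (up to sign) in $\mathcal{S}$, and no nontrivial product of only $Z$'s does, because every generator $S_j^G$ carries exactly one $X$. Combining the two sub-cases yields exactly the stated condition $\overline{\mathcal{N}_i^G}\cap\mathcal{L}=\emptyset$.

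Given the lemma I would rewrite the selection condition via the symmetry of adjacency, $\overline{\mathcal{N}_i^G}\cap\mathcal{L}=\emptyset \iff i\notin\bigcup_{l\in\mathcal{L}}\overline{\mathcal{N}_l^G}$, and then sum the three groups of terms in Equation~\eqref{eqn:bell_operator_as_a_sum_of_stabilizing_operators}: the surviving stabilizers with $i\in\mathcal{N}_r^G$ number $\left|\mathcal{N}_r^G\setminus\bigcup_{l\in\mathcal{L}}\overline{\mathcal{N}_l^G}\right|=\left|\mathcal{W}^G\right|$, those with $i\notin\overline{\mathcal{N}_r^G}$ give $\left|\mathcal{T}^G\right|$, and the root term contributes $\sqrt{2}n_{\text{max}}$ precisely when $\overline{\mathcal{N}_r^G}\cap\mathcal{L}=\emptyset$ and $0$ otherwise, which is the case split in Equation~\eqref{Theorem_content}. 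For the equality with the induced-subgraph operator, the same factorization gives $\left\langle S_i^{G'}\right\rangle_\rho=\left\langle S_i^G\right\rangle_\rho$ for every retained vertex $i\notin\mathcal{L}$ (since $S_i^{G'}$ is $S_i^G$ with the $Z$'s on lost neighbors dropped, and those contributed $1$ anyway), while the terms of $I_r^{G*}$ indexed by $i\in\mathcal{L}$ contribute $0$; checking $\mathcal{N}_r^{G'}=\mathcal{N}_r^G\setminus\mathcal{L}$ and $\overline{\mathcal{N}_r^{G'}}=\overline{\mathcal{N}_r^G}\setminus\mathcal{L}$ then matches the index sets, and whenever the root term is nonzero no neighbor of $r$ is lost, so $r$ still has degree $n_{\text{max}}$ in $G'$ and remains a root (a subgraph admits no larger degree), making the coefficient $\sqrt{2}n_{\text{max}}$ correct for $I_r^{G'*}$ as well. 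The main obstacle I anticipate is not any single hard step but keeping this bookkeeping airtight — the product-state/product-operator factorization and tracking exactly which Pauli factors vanish, together with the index-set and root-degree accounting for the induced subgraph.
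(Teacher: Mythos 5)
Your proposal is correct, and its overall architecture matches the paper's: both reduce Equation~\eqref{Theorem_content} to a per-vertex lemma computing \(\left\langle S_{i}^{G}\right\rangle_{\rho}\) (your condition \(\overline{\mathcal{N}_{i}^{G}}\cap\mathcal{L}=\emptyset\) is, by symmetry of adjacency, the paper's \(i\notin\bigcup_{l\in\mathcal{L}}\overline{\mathcal{N}_{l}^{G}}\) in Lemma~\ref{lost_qubits_lemma}), and then do the same index-set bookkeeping to land on \(\sqrt{2}n_{\text{max}}\), \(\sqrt{2}\left|\mathcal{W}^{G}\right|\) and \(\left|\mathcal{T}^{G}\right|\). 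Where you genuinely diverge is in how the lemma itself is proved. The paper expands \(\left|\phi^{G}\right\rangle=\sum_{s}\alpha_{s}\left|\psi_{s}\right\rangle\otimes\left|s\right\rangle_{\mathcal{L}}\), uses the eigenvalue equations to show each \(\left|\psi_{s}\right\rangle\) is a \(\pm1\) eigenstate of the reduced stabilizer, proves all \(\left|\alpha_{s}\right|\) are equal, and exhibits a parity-flipping bijection between the two eigenspaces to make the signed sum vanish. You instead factor \(\rho\) and \(S_{i}^{G}\) across the retained/lost bipartition, use \(\left\langle 0\right|X\left|0\right\rangle=0\) to kill the \(i\in\mathcal{L}\) case, multiply by \(S_{i}^{G}\) to reduce the retained-qubit factor to \(\left\langle\phi^{G}\right|\bigotimes_{j\in\mathcal{N}_{i}^{G}\cap\mathcal{L}}Z_{j}\left|\phi^{G}\right\rangle\), and invoke the standard fact that a Pauli has nonzero expectation in a stabilizer state only if it lies (up to sign) in the stabilizer group — which no nontrivial all-\(Z\) string does, since every generator carries exactly one \(X\). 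This is shorter and leans on general stabilizer-formalism machinery rather than an explicit decomposition; the paper's route is more self-contained and additionally produces the structural facts about the post-loss state (the equal-weight mixture of induced-graph-stabilizer eigenstates) that it reuses elsewhere, e.g., in Lemma~\ref{full_graph_with_sub_graph_operator}. You also correctly handle the one subtlety the equality of the two operators hinges on, namely that the \(\sqrt{2}n_{\text{max}}\) coefficient of \(I_{r}^{G'*}\) is only at issue when \(\overline{\mathcal{N}_{r}^{G}}\cap\mathcal{L}=\emptyset\), in which case \(r\) retains degree \(n_{\text{max}}\) and remains a root of the induced subgraph.
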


An example of a graph along with the sets \(\mathcal{W}\) and \(\mathcal{T}\) defined in Theorem~\ref{expectation_value_theorem} is shown in Figure~\ref{fig:theorem_example}.

\begin{figure}
	\centering
	\includegraphics[height=3cm]{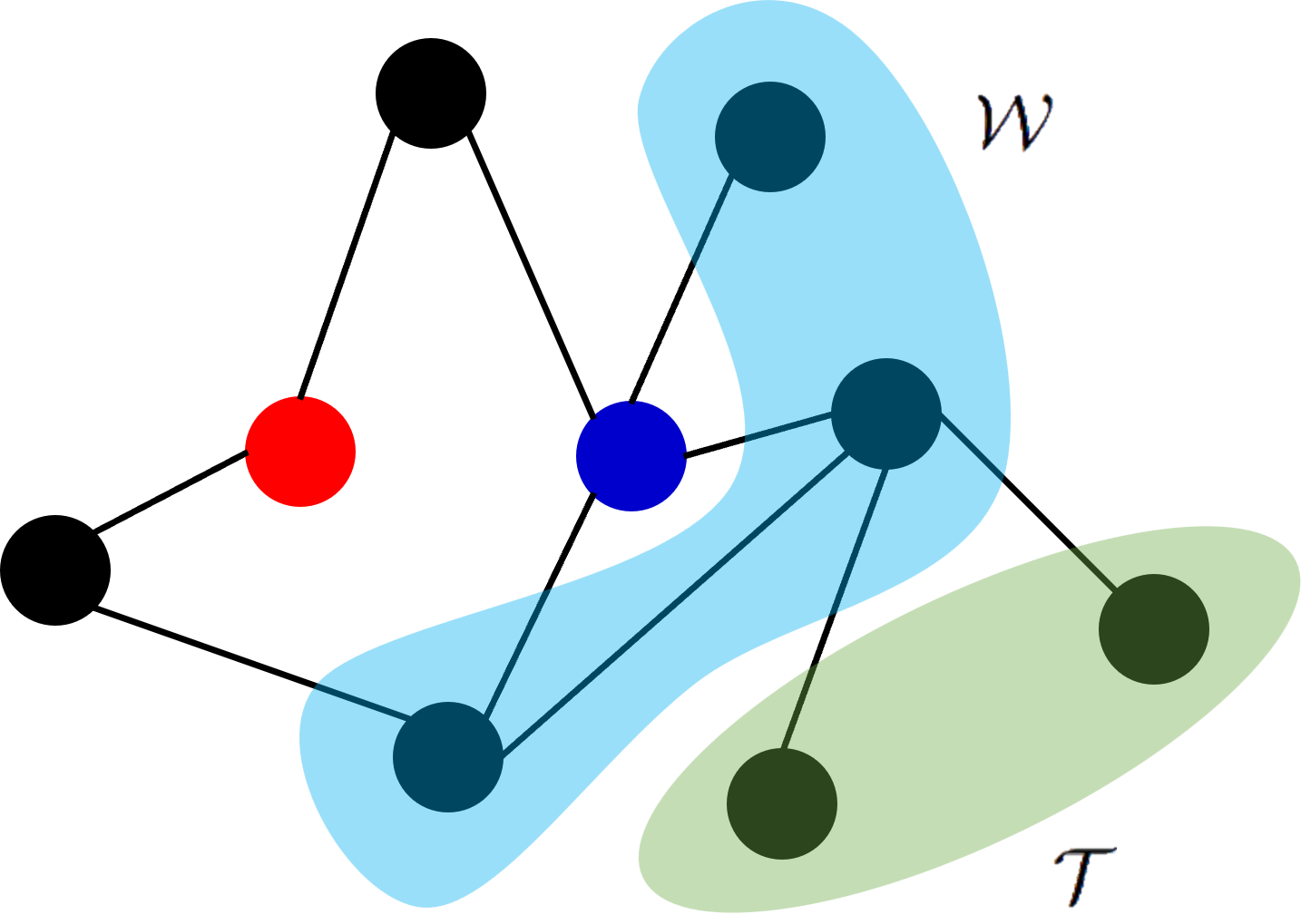}
	\caption{An example of the sets contributing to the expectation values \(I_{r}^{G*}, I_{r}^{G\left[V\setminus \left\{ v_{l} \text{ s.t } l\in \mathcal{L}\right\}\right]*}\) appear in Figure~\ref{fig:theorem_example}. A root vertex is denoted in blue, and a lost qubit is marked in red. The sets \(\mathcal{W}, \mathcal{T}\) defined in Theorem~\ref{expectation_value_theorem} which are not neighbors of the lost qubit set and have a different contribution to the expectation value are marked in light blue and light green respectively.}
	\label{fig:theorem_example}
\end{figure}

The Bell operators maximally violated by the corresponding graph states described in Equation~\eqref{eqn:measurments} and Equation~\eqref{eqn:bell_operator_as_a_sum_of_stabilizing_operators} are a sum of stabilizing operators defined in Equations~\eqref{eqn:graph_stablizing_operators}. The following lemma describes the effect of tracing out qubits on the expectation value of each stabilizer; The proof of Theorem~\ref{expectation_value_theorem} is a direct result of it. 

\begin{restatable}{lemma}{universe}
\label{lost_qubits_lemma}
Let \(G, \mathcal{L}, \rho\) be as in Theorem~\ref{expectation_value_theorem}.
Let \(S_{i}^{G}\) be the stabilizing operator associated with the vertex \(v_{i} \in V\) corresponding to the graph \(G\). Then,
	\begin{enumerate}[label=(\alph*)]
	\item 
    	\begin{align}
    	\nonumber
    	    \left\langle S_{i}^{G} \right\rangle_{\rho}  = 
    	    \begin{cases}
    		   0 & i \in \bigcup_{l \in \mathcal{L}} \overline{\mathcal{N}_{l}^{G}} \\
    		1 & \text{otherwise}
    	    \end{cases}
    	\end{align}
    \item
        \begin{align}
    	\label{eqn:stablizer_expectation_value}
        	    \left\langle S_{i}^{G\left[V\setminus \right\{ v_{k} \text{ s.t. } k\in\mathcal{L} \left\}\right]} \right\rangle_{\rho}  = 
        	    \begin{cases}
        		   0 & i \in \bigcup_{l \in \mathcal{L}} \mathcal{N}_{l}^{G} \\
        		1 & \text{otherwise}
        	    \end{cases}
        	     \;.
        \end{align}
	\end{enumerate}
\end{restatable}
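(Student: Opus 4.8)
The plan is to evaluate each expectation value \(\langle S_i^{\bullet}\rangle_{\rho}\) directly from the definitions, using two structural facts. First, \(\rho\) factorizes as a tensor product of the ``lost'' register \(\mathcal{L}\), which is in the fixed product state \(\bigotimes_{l\in\mathcal{L}}\left|0\right\rangle_{l}\left\langle0\right|_{l}\), with the reduced state \(\mathrm{Tr}_{\mathcal{L}}(\left|\phi^{G}\right\rangle\left\langle\phi^{G}\right|)\) on \(V\setminus\mathcal{L}\); hence whenever an operator acts trivially on \(\mathcal{L}\), its expectation in \(\rho\) equals its expectation in \(\left|\phi^{G}\right\rangle\). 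Second, the stabilizer relations \(S_{i}^{G}\left|\phi^{G}\right\rangle=\left|\phi^{G}\right\rangle\) of Equation~\eqref{eqn:eigenvalue_equation}, together with Hermiticity of the \(S_{i}^{G}\), let me insert a stabilizer on either side of an inner product at will.

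For part (a) I would distinguish three positions of \(i\) relative to \(\mathcal{L}\). If \(i\) is neither lost nor adjacent to a lost vertex, i.e.\ \(i\notin\bigcup_{l\in\mathcal{L}}\overline{\mathcal{N}_{l}^{G}}\), then \(S_{i}^{G}\) is supported on \(V\setminus\mathcal{L}\) and \(\langle S_{i}^{G}\rangle_{\rho}=\left\langle\phi^{G}\right|S_{i}^{G}\left|\phi^{G}\right\rangle=1\). If \(i\in\mathcal{L}\), then \(S_{i}^{G}\) contains the factor \(X_{i}\) acting on \(\left|0\right\rangle_{i}\left\langle0\right|_{i}\), and since \(\left\langle0\right|X\left|0\right\rangle=0\) the whole expectation value vanishes (the \(\left|1\right\rangle\) and maximally mixed replacements of the footnote give \(0\) for the same reason). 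The remaining case, \(i\notin\mathcal{L}\) but \(i\in\mathcal{N}_{l}^{G}\) for some \(l\in\mathcal{L}\), is the only nontrivial one: the lost qubits enter \(S_{i}^{G}\) only through \(Z\) factors, and \(\left\langle0\right|Z\left|0\right\rangle=1\), so nothing vanishes for free. Splitting \(S_{i}^{G}\) into its part on \(V\setminus\mathcal{L}\) and its part on \(\mathcal{L}\) and tracing out \(\mathcal{L}\) reduces the computation to \(\left\langle\phi^{G}\right|X_{i}\bigotimes_{j\in\mathcal{N}_{i}^{G}\setminus\mathcal{L}}Z_{j}\left|\phi^{G}\right\rangle\); writing \(X_{i}\bigotimes_{j\in\mathcal{N}_{i}^{G}\setminus\mathcal{L}}Z_{j}=S_{i}^{G}\cdot\bigotimes_{l\in\mathcal{N}_{i}^{G}\cap\mathcal{L}}Z_{l}\) and applying the eigenvalue equation on both sides turns this into \(\left\langle\phi^{G}\right|\bigotimes_{l\in\mathcal{N}_{i}^{G}\cap\mathcal{L}}Z_{l}\left|\phi^{G}\right\rangle\), the expectation of a diagonal Pauli string over a non-empty set of vertices.

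The main obstacle is therefore the auxiliary claim that \(\left\langle\phi^{G}\right|\bigotimes_{l\in T}Z_{l}\left|\phi^{G}\right\rangle=0\) for every non-empty \(T\subseteq V\). I would prove it by picking any \(l\in T\) and checking that the generator \(S_{l}^{G}\) anticommutes with \(\bigotimes_{l'\in T}Z_{l'}\): the only site where they fail to commute is \(l\) itself, where \(X_{l}\) meets \(Z_{l}\), while on every other vertex \(S_{l}^{G}\) acts as \(I\) or \(Z\) and hence commutes with the \(Z\)-string. Inserting \(S_{l}^{G}\) on both sides via Equation~\eqref{eqn:eigenvalue_equation} and pushing it through the anticommutation flips the overall sign, so the expectation equals its own negative and must be \(0\). (Equivalently, the only element of the stabilizer group \(\langle S_{1}^{G},\dots,S_{N}^{G}\rangle\) that is diagonal in the computational basis is the identity, since an uncancelled \(X_{j}\) survives in \(\prod_{i\in R}S_{i}^{G}\) as soon as \(j\in R\).)

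Part (b) is the same argument, shortened. The induced-subgraph stabilizer \(S_{i}^{G'}=X_{i}\bigotimes_{j\in\mathcal{N}_{i}^{G}\setminus\mathcal{L}}Z_{j}\) (with \(G'\) the graph on \(V\setminus\mathcal{L}\)) is already supported on \(V\setminus\mathcal{L}\), so \(\langle S_{i}^{G'}\rangle_{\rho}=\left\langle\phi^{G}\right|X_{i}\bigotimes_{j\in\mathcal{N}_{i}^{G}\setminus\mathcal{L}}Z_{j}\left|\phi^{G}\right\rangle\); rewriting the operator as \(S_{i}^{G}\cdot\bigotimes_{l\in\mathcal{N}_{i}^{G}\cap\mathcal{L}}Z_{l}\) gives \(\left\langle\phi^{G}\right|\bigotimes_{l\in\mathcal{N}_{i}^{G}\cap\mathcal{L}}Z_{l}\left|\phi^{G}\right\rangle\), which equals \(1\) when \(i\notin\bigcup_{l\in\mathcal{L}}\mathcal{N}_{l}^{G}\) and, by the auxiliary claim, \(0\) otherwise. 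Theorem~\ref{expectation_value_theorem} then drops out by linearity: expanding \(I_{r}^{G*}\) (and its induced-graph analogue) as the weighted stabilizer sum of Equation~\eqref{eqn:bell_operator_as_a_sum_of_stabilizing_operators} and substituting the two cases of the lemma, the surviving terms are exactly those counted by \(\sqrt{2}n_{\text{max}}\) (present iff \(\overline{\mathcal{N}_{r}^{G}}\cap\mathcal{L}=\emptyset\)), \(\sqrt{2}\left|\mathcal{W}^{G}\right|\), and \(\left|\mathcal{T}^{G}\right|\).
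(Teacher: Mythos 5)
Your proof is correct, and it takes a genuinely different route from the paper's. The paper expands \(\left|\phi^{G}\right\rangle=\sum_{s}\alpha_{s}\left|\psi_{s}\right\rangle\otimes\left|s\right\rangle_{\mathcal{L}}\), uses the eigenvalue equation to show each \(\left|\psi_{s}\right\rangle\) is an eigenstate of the induced-graph stabilizer with eigenvalue \(\left(-1\right)^{\sum_{l\in\mathcal{L}\cap\mathcal{N}_{i}^{G}}s_{l}}\), separately proves that all \(\left|\alpha_{s}\right|\) coincide, and then kills the sum \(\sum_{s}\left(-1\right)^{\sum s_{l}}\left|\alpha_{s}\right|^{2}\) by a parity pairing. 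You bypass the Schmidt-type decomposition entirely: the factorized form of \(\rho\) plus \(\left\langle0\right|Z\left|0\right\rangle=1\) reduces every case to \(\left\langle\phi^{G}\right|\bigotimes_{l\in\mathcal{N}_{i}^{G}\cap\mathcal{L}}Z_{l}\left|\phi^{G}\right\rangle\), and the vanishing of any non-trivial \(Z\)-string expectation follows from the standard anticommutation argument (insert \(S_{l}^{G}\) on both sides, pick up a sign, conclude the expectation equals its own negative). Your reduction is sound — note that \(S_{i}^{G}\) commutes with \(\bigotimes_{l\in\mathcal{N}_{i}^{G}\cap\mathcal{L}}Z_{l}\) since they overlap only on \(Z\) factors, so the eigenvalue equation applies cleanly on both sides — and the set \(\mathcal{N}_{i}^{G}\cap\mathcal{L}\) is non-empty exactly in the case you need it to be. What your approach buys is brevity and generality: it is the textbook fact that a Pauli outside \(\pm\) the stabilizer group has zero expectation on a stabilizer state, and it makes transparent why the same conclusion holds for the replacement states in the paper's footnote. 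What the paper's longer route buys is the explicit post-loss decomposition of \(\rho\) into the \(\left|\psi_{s}\right\rangle\) components with equal weights, which it reuses in Appendix~\ref{appendix:full_graph_with_sub_graph}; your method recovers that lemma too, since it never actually uses the form of the state on \(\mathcal{L}\) when the operator is supported off \(\mathcal{L}\).
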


The full proof of Lemma~\ref{lost_qubits_lemma} is shown in Appendix~\ref{appendix:full_lemma_proof}, with the main ideas mentioned here. It becomes apparent that after the partial trace, the remaining state is \emph{not} a graph state of the induced graph. Stabilizing operators corresponding to neighbors of traced out qubits \(i \in \left(\bigcup_{l \in \mathcal{L}} \mathcal{N}_{l}^{G}\right)\) do not have expectation value \(1\) in the new state.

The classical limits of the Bell operators corresponding to the graphs \(G\) and \( G\left[V\setminus \left\{ v_{l} \text{ s.t } l\in \mathcal{L}\right\}\right]\) generally satisfy \(\beta^{G\left[V\setminus \left\{ v_{l} \text{ s.t } l\in \mathcal{L}\right\}\right]}_{\text{C}} < \beta^{G}_{\text{C}}\), since the number of vertices is smaller in the induced graph when \(\mathcal{L} \neq \emptyset\).
As shown in Theorem~\ref{expectation_value_theorem}, the expectation values of the operators \(I_{r}^{G*}\) and \( I_{r}^{G\left[V\setminus \left\{ v_{l} \text{ s.t } l\in \mathcal{L}\right\}\right]*}\) with respect to the density matrix in Equation~\eqref{eqn:remaining_density_matrix} are the same. Hence, it is more likely to observe violation of the operator \(I_{r}^{G\left[V\setminus \left\{ v_{l} \text{ s.t } l\in \mathcal{L}\right\}\right]*}\). Hence, one can gather the statistics performing measurements with respect to the original full graph and then compare the expectation value to the classical limit of the induced graph.

When attempting to use graph states in scenarios in which qubit loss is dominant, the following theorem shows that for every graph, a loss of a root jeopardizes the violation.

\begin{theorem}
\label{all_roots_ruin_violation}
    Let \(G, \mathcal{L}, \rho\) be as in Theorem~\ref{expectation_value_theorem} and let \(\mathcal{R}\) be the set of all roots of \(G\). Denote by \(\mathcal{B}=\left\{ I_{r}^{G*} | \; r\in\mathcal{R}\right\}\) the set of Bell operators defined in Equation~\eqref{eqn:bell_operator_as_a_sum_of_stabilizing_operators} with the specific measurement bases defined in Equation~\eqref{eqn:measurments}. Then, \(\forall r' \in \mathcal{R}\) if \(\ \mathcal{L} = \left\{r'\right\}\), it follows that\footnote{In Equation \eqref{eqn:all_roots_ruin_violation}, \(\left\langle I_{r}^{G\left[V \setminus \left\{r'\right\}\right]*} \right\rangle_{\rho}\) is not always defined, it assumes \(r\) is also a root for the induced graph.} \(\forall  I_{r}^{G*} \in \mathcal{B}\)
    \begin{align}
    \label{eqn:all_roots_ruin_violation}
        \left\langle I_{r}^{G*} \right\rangle_{\rho} = \left\langle I_{r}^{G\left[V \setminus \left\{r'\right\}\right]*} \right\rangle_{\rho} < \beta_{\text{C}}^{G\left[V \setminus \left\{r'\right\}\right]} < \beta_{\text{C}}^{G}
        \;. 
    \end{align} 
\end{theorem}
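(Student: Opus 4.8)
The plan is to move everything to the stabilizer form of the Bell operator. By Equation~\eqref{eqn:bell_operator_as_a_sum_of_stabilizing_operators} we may write $I_r^{G*}=\sum_{i=1}^{N}c_i\,S_i^{G}$ with $c_r=\sqrt{2}\,n_{\text{max}}$, $c_i=\sqrt{2}$ for $i\in\mathcal N_r^{G}$ and $c_i=1$ for $i\notin\overline{\mathcal N_r^{G}}$; in particular $c_i\ge 1$ for every $i$ and $\sum_i c_i=\beta_{\text{Q}}^{G}=(2\sqrt2-1)n_{\text{max}}+N-1$ by Equation~\eqref{quantum_bound}. Specializing Lemma~\ref{lost_qubits_lemma}(a) to $\mathcal L=\{r'\}$ gives $\langle S_i^{G}\rangle_\rho=1$ precisely when $i\notin\overline{\mathcal N_{r'}^{G}}$ and $0$ otherwise, so
\[
  \langle I_r^{G*}\rangle_\rho=\sum_{i\notin\overline{\mathcal N_{r'}^{G}}}c_i=\beta_{\text{Q}}^{G}-\!\!\sum_{i\in\overline{\mathcal N_{r'}^{G}}}\!\!c_i .
\]
Since $r'$ is a root, the subtracted sum runs over exactly $n_{\text{max}}+1$ indices, and lower bounding it is all that remains.

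Two of the three links in the claimed chain are immediate. The equality $\langle I_r^{G*}\rangle_\rho=\langle I_r^{G[V\setminus\{r'\}]*}\rangle_\rho$ is Theorem~\ref{expectation_value_theorem} (valid whenever $r$ is also a root of the induced graph, which is automatic when $r'\notin\mathcal N_r^{G}$; when $r=r'$ this link is vacuous and only the value of $\langle I_r^{G*}\rangle_\rho$ is used, as in the footnote). For the last link, $\beta_{\text{C}}^{G[V\setminus\{r'\}]}=n'_{\text{max}}+(N-1)-1$ where $n'_{\text{max}}$ is the maximal degree of the induced graph; deleting a vertex cannot raise the maximal degree, so $n'_{\text{max}}\le n_{\text{max}}$ and $\beta_{\text{C}}^{G[V\setminus\{r'\}]}\le n_{\text{max}}+N-2<n_{\text{max}}+N-1=\beta_{\text{C}}^{G}$.

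The core is $\langle I_r^{G*}\rangle_\rho<\beta_{\text{C}}^{G[V\setminus\{r'\}]}$, which I would prove by splitting on whether $r\in\overline{\mathcal N_{r'}^{G}}$. If $r\notin\overline{\mathcal N_{r'}^{G}}$ (so $r\ne r'$ and $r\not\sim r'$), then $\sum_{i\in\overline{\mathcal N_{r'}^{G}}}c_i\ge n_{\text{max}}+1$, whence $\langle I_r^{G*}\rangle_\rho\le\beta_{\text{Q}}^{G}-(n_{\text{max}}+1)=(2\sqrt2-2)n_{\text{max}}+N-2$; here $\deg(r)$ is unchanged by the deletion, so $r$ remains a root of the induced graph and $\beta_{\text{C}}^{G[V\setminus\{r'\}]}=n_{\text{max}}+N-2$, leaving a gap of $(3-2\sqrt2)n_{\text{max}}>0$. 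If $r\in\overline{\mathcal N_{r'}^{G}}$ (so $r=r'$ or $r\sim r'$), then the subtracted sum contains $c_r=\sqrt2\,n_{\text{max}}$ together with $n_{\text{max}}$ further terms each $\ge1$, so $\langle I_r^{G*}\rangle_\rho\le\beta_{\text{Q}}^{G}-(\sqrt2+1)n_{\text{max}}=(\sqrt2-2)n_{\text{max}}+N-1$; combining with the trivial bound $\beta_{\text{C}}^{G[V\setminus\{r'\}]}\ge N-2$ leaves a gap of $(2-\sqrt2)n_{\text{max}}-1$, which is positive for $n_{\text{max}}\ge 2$. Assembling the three links completes the proof.

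The main obstacle, such as it is, is careful bookkeeping of the special index $r$: one must track that $r'\in\overline{\mathcal N_{r'}^{G}}$ always, that $r$ may or may not survive as a root of the induced graph (so that the identity $\langle I_r^{G*}\rangle_\rho=\langle I_r^{G[V\setminus\{r'\}]*}\rangle_\rho$, and indeed the very meaning of $I_r^{G[V\setminus\{r'\}]*}$, needs the hypothesis highlighted in the footnote), and that the coefficient $c_r=\sqrt2\,n_{\text{max}}$ must be isolated in the subtracted sum in the second case to get a bound strong enough to beat the smaller classical value. One should also flag the standing assumption $n_{\text{max}}\ge 2$: for $n_{\text{max}}=1$ the graph is a disjoint union of edges, the operator degenerates to CHSH, and for a single edge the final inequality becomes an equality rather than a strict one. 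Beyond these points the argument is the short arithmetic above and the single numerical fact $2\sqrt2<3$.
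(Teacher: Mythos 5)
Your proof is correct and follows essentially the same route as the paper's: decompose \(I_r^{G*}\) into stabilizers, apply Lemma~\ref{lost_qubits_lemma} to see that the \(n_{\text{max}}+1\) stabilizers indexed by \(\overline{\mathcal{N}_{r'}^{G}}\) drop out, and case-split on whether \(r\) lies in that set so as to compare the resulting decrease against the (possibly reduced) classical bound of the induced graph. Your bookkeeping is in fact more explicit than the paper's terse sketch, and your observation that the strict inequality degenerates for \(n_{\text{max}}=1\) (a single edge) is a legitimate edge case the paper leaves implicit.
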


\begin{proof}
    Let \(r,r'\) be as in the theorem conditions. First, assume that \(n_\text{max}\) of the induced graph remains unchanged so that \(  \beta_{\text{C}}^{G\left[V \setminus \left\{r'\right\}\right]} = N + n_{\text{max}} -2\) and that \(r \neq r'\). If some root is lost,  \(n_{\text{max}} + 1\) stabilizers in Equation~\eqref{eqn:bell_operator_as_a_sum_of_stabilizing_operators} vanish according to Lemma~\ref{lost_qubits_lemma} (corresponding to the lost qubit and its neighbors). The coefficients of all stabilizers in Equation~\eqref{eqn:bell_operator_as_a_sum_of_stabilizing_operators} are \(1, \sqrt{2}\) or \(\sqrt{2}n_{\text{max}}\), meaning that the quantum bound is decreased at
     least by \(n_{\text{max}} + 1\). Hence, \(\forall  I_{r}^{G*} \in \mathcal{B}\)
     \begin{align}
         \left\langle  I_{r}^{G*} \right\rangle \leq \beta_{\text{Q}}^{G} - \left(n_{\text{max}} + 1\right) = \left(2\sqrt{2}-2\right)n_{\text{max}}+N-2 < \beta_{\text{C}}^{G\left[V \setminus \left\{r'\right\}\right]}
     \;.
     \end{align}
     If \(n_\text{max}\) is decreased, it follows that \(r' \in \mathcal{N}_{r}^{G}\) which means that the expectation value is decreased by at least \(\left(\sqrt{2} + 1\right)n_{\text{max}} + \sqrt{2}  - 1\), meaning that there is a violation even though the classical limit is decreased by a larger amount.
      \(\beta_{\text{C}}^{G\left[V \setminus \left\{r'\right\}\right]} < \beta_{\text{C}}^{G}\) as mentioned before so this holds for \( \beta_{\text{C}}^{G} \) as well. The cases where \(r = r'\) or when there is a single root are treated similarly.
\end{proof}

Theorem~\ref{all_roots_ruin_violation} implies that when a root vertex is lost there is no violation of the classical limit and hence for every choice of graph state, the set of roots must be kept safe when attempting to use the Bell operator proposed in Equation~\eqref{eqn:bell_operator_sccarni} as the operator for a Bell test. The margin by which the Bell operator is not violated is generally expected to decrease further with additional lost qubits. In some extreme examples this does not hold (such as graphs where roots belong to clusters which are almost disconnected and a whole cluster of qubits is lost).

\subsection{Examples}
In this section, we discuss several applications of Theorem~\ref{expectation_value_theorem}. Different examples of graph states are examined, with the objective of identifying structures which are more likely to violate the classical limit, even when some qubits are lost. We start with well-known examples which are found to be loss-sensitive, move to loss-tolerant examples and in the end discuss the case where the set of lost qubits is unknown. In order to simplify the notation, induced graphs of the form \(G\left[V\setminus \left\{ v_{l} \text{ s.t } l\in \mathcal{L}\right\}\right]\) are denoted by \(G'\). In the entire section, \(N\) stands for the total number of vertices in the graph, as before.

\subsubsection{Loss-Sensitive Graph States}
\label{loss_sensitive_section}
\paragraph{Ring graph state.} An example of a ring graph state is shown in Figure~\ref{fig:ring}. The classical limit is given by \(\beta_{\text{C}}^{\text{Ring}}=N+1\) and is reduced by \(1\) with every lost qubit (as long as the maximal number of neighbors remains \(2\)). In this state, all the vertices are roots, and therefore as an immediate result of Theorem~\ref{all_roots_ruin_violation} there is no violation when any qubit is lost.   

\begin{figure}
	\centering
	\begin{subfigure}[b]{6cm}
	\centering
		\includegraphics[height=3cm]{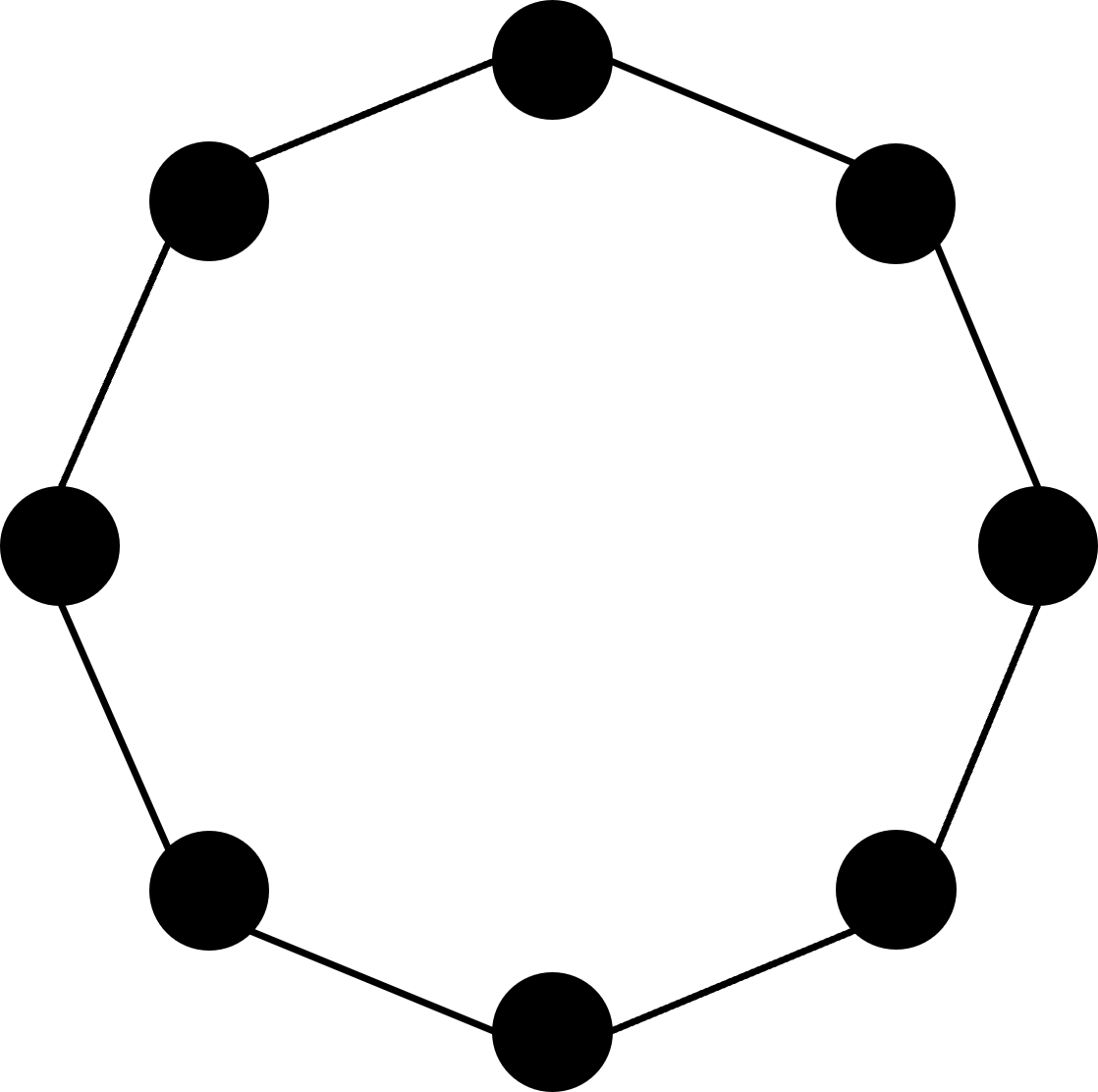}
		\caption{}
		\label{fig:ring}
	\end{subfigure}
	\quad
	\begin{subfigure}[b]{6cm}
	    \centering
		\includegraphics[height=3cm]{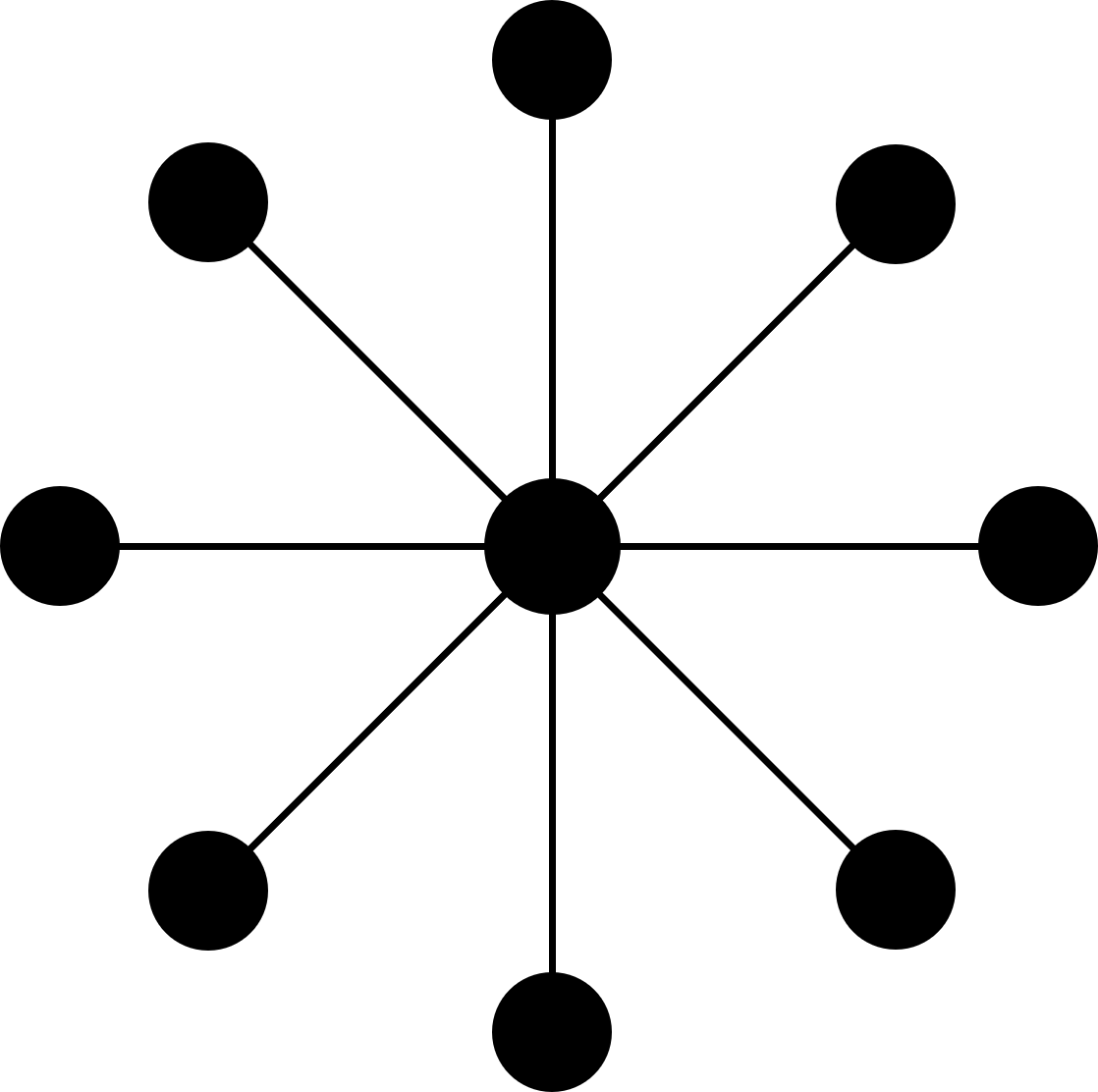}
		\caption{}
		\label{fig:ghz}
	\end{subfigure}
	\caption{Loss sensitive graph examples. For both examples there is no violation of the classical limit. (\subref{fig:ring}) A ring graph.  (\subref{fig:ghz}) A GHZ (``star'') graph. }
\end{figure}

\paragraph{GHZ (``star'') graph state.} The GHZ state is the graph state corresponding to a ``star'' graph (Figure~\ref{fig:ghz}) up to local unitaries. The classical limit is given by \(\beta_{\text{C}}^{\text{GHZ}}=2N - 2\). If the qubit corresponding to the center vertex is lost, the expectation value of the Bell operator vanishes as shown in the proof of Theorem~\ref{all_roots_ruin_violation}. If some other qubit is lost, the expectation value becomes
\begin{align}
    \left\langle I^{\text{GHZ*}}_{r} \right\rangle_{\rho} = \left\langle I^{\text{GHZ'*}}_{r} \right\rangle_{\rho} = \sqrt{2}\left(N-2\right) < \beta_{\text{C}}^{\text{GHZ'}}, \beta_{\text{C}}^{\text{GHZ}}
    \;.
\end{align}

The GHZ example demonstrates that when neighbors of the vertex chosen as the root are lost, a major contribution to the Bell operator's expectation value vanishes. This leads to the understanding that in order to construct graph states that are robust to loss of qubits, it can be beneficial to choose a graph with a redundancy of roots.  

In order to maintain some violation of the classical limit it is sufficient to have one choice of root succeed. A graph with only one root is sensitive to loss of any neighbor of the root or the root itself and thus having more than one root might be helpful. On the other hand, if all qubits are roots according to Theorem~\ref{all_roots_ruin_violation} there is no violation. Hence, more promising architectures must include several ``clusters'' of roots and their neighbors. This key insight leads to the construction of following examples.

\subsubsection{Loss-Tolerant Graph States}
\label{loss_tolerant_section}
\paragraph{``Two centered GHZ'' graph state.} Consider the ``two centered GHZ'' as in Figure~\ref{fig:two_centered_ghz}. The graph has two root vertices connected to one another and several leaf vertices adjacent to each one of the centers. Denote the centers by \(r_{1},r_{2}\). The classical limit in this case is given by \(\beta_{\text{C}}^{\text{Two centered GHZ}} = \frac{3}{2}N - 1\). If \(r_{1}\) or \(r_{2}\) are lost there is no violation as implied from Theorem~\ref{all_roots_ruin_violation}. 

On the other hand, suppose that a qubit corresponding to \(v_{l}\), \(l \neq r_{1},r_{2}\) is lost. Without loss of generality \(l\in\mathcal{N}\left(r_{1}\right) \). The expectation value of the Bell operator corresponding to \(r_{2}\) is given by 
\begin{align}
	\label{eqn:two_centered_star_with_lost_qubit_expectation_value}
	\left\langle I^{\text{Two centered GHZ*}}_{r_{2}} \right\rangle_{\rho} =	\left\langle I^{\text{Two centered GHZ'*}}_{r_{2}} \right\rangle_{\rho} = \left(\sqrt{2} + \frac{1}{2}\right)N -\sqrt{2} - 2
	 \;.
\end{align}
Hence the classical limit is violated.
If qubits adjacent to both \(r_{1}, r_{2}\) are lost, there is no violation. However, if multiple qubits adjacent to the same root are lost there is always a violation of the Bell inequality corresponding to the induced graph, and in many cases also with respect to the full graph. The exact conditions are summarized in Table~\ref{table:conditions_for_violation}.

\begin{figure}[t]
	\begin{center}
		\begin{subfigure}{6cm}
		\centering
			\includegraphics[height=3cm]{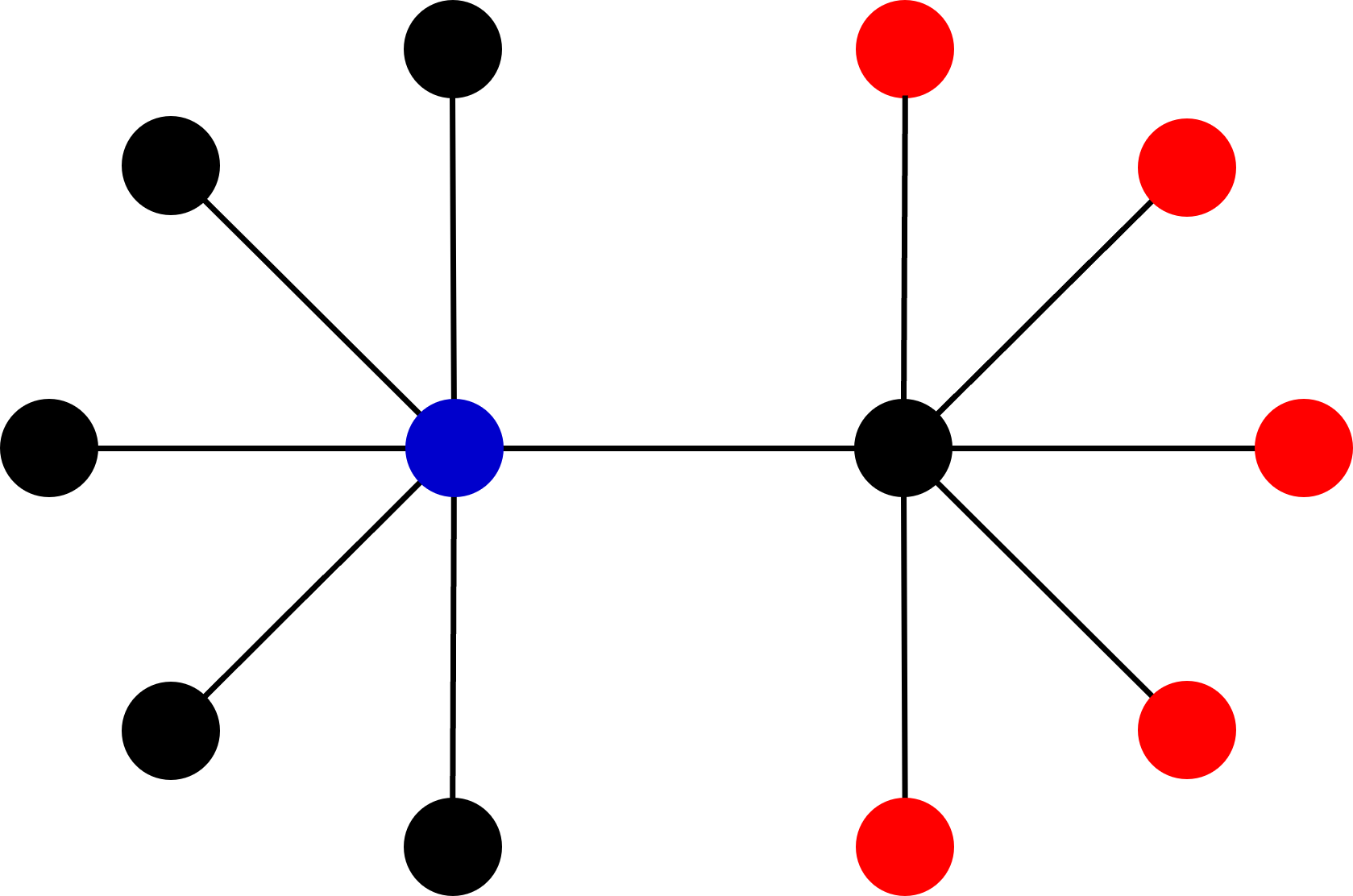}
			\caption{}
	    	\label{fig:two_centered_ghz}
		\end{subfigure}
		\quad
		\begin{subfigure}{6cm}
		\centering
			\includegraphics[height=3cm]{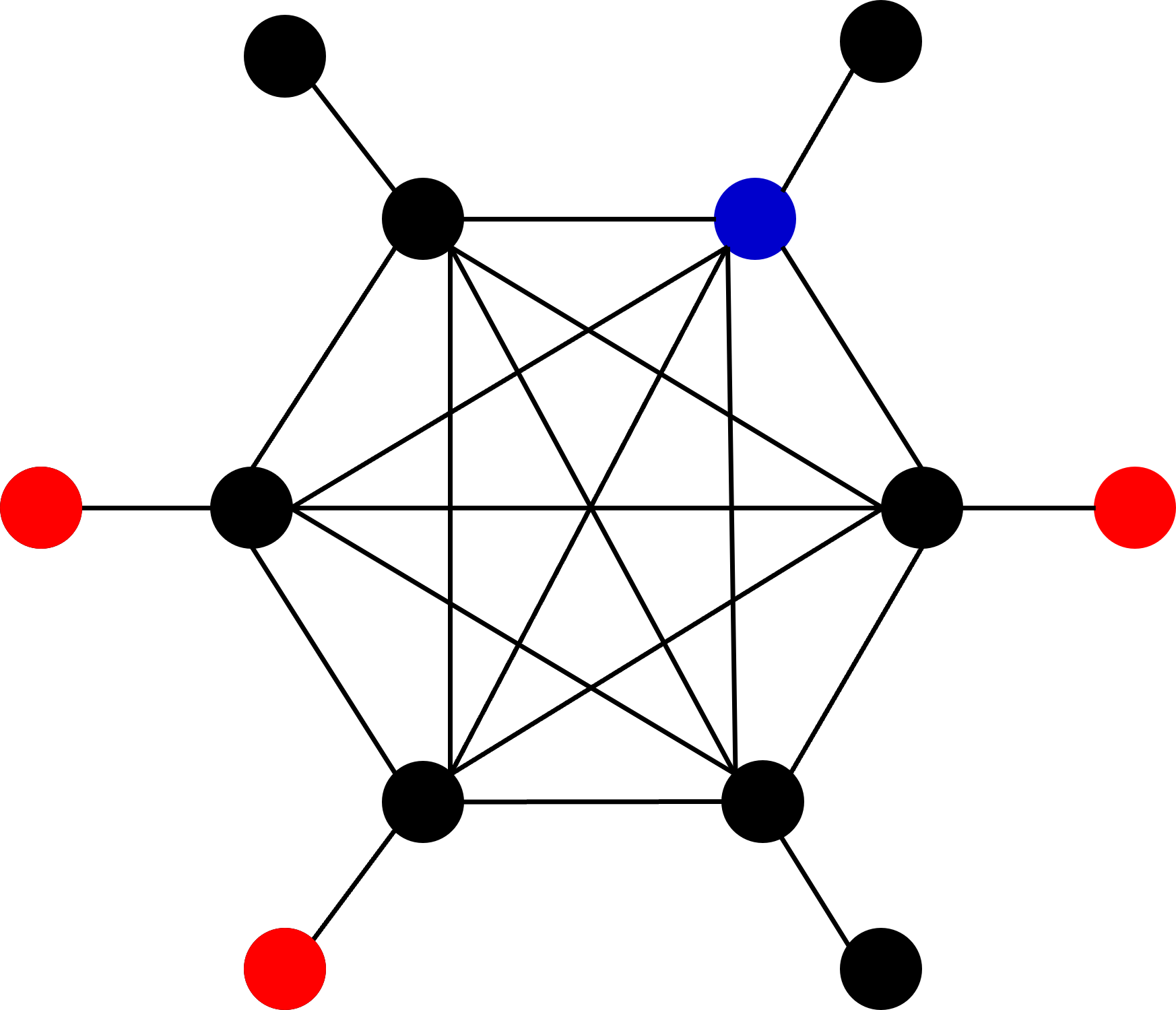}
			\caption{}
			\label{fig:complete_like_graph_state}
		\end{subfigure}
	\end{center}
	\caption{Loss tolerant graph examples with \(N=12\) which violate the classical limit with respect to the induced graph Bell operator \(I^{G*}_{r}\). The chosen root vertex is colored in blue, and the lost qubits are colored in red. (\subref{fig:two_centered_ghz})  A ``two centered GHZ''. In this case, the root vertex marked in blue is the only choice. All the qubits adjacent to the right root are lost and the violation is maintained. If qubits adjacent to both roots are lost there is no violation. (\subref{fig:complete_like_graph_state}) A ``dense center graph state''. Any three leaves can be lost while maintaining the violation of the classical limit. For any combination there are three choices of root.}
\end{figure}

The two centered GHZ example demonstrates that having more than one root enables robustness to loss of qubits. If a qubit adjacent to one of the roots is lost, it is possible to evaluate the Bell inequality with respect to the other root and maintain the violation. However, it allowed loss of qubits adjacent to only one of the roots. This can be improved upon by constructing graphs with a larger number of roots which will violate the Bell inequality as long as all neighbors of some root aren't lost. This leads to the following example.

\paragraph{``Dense center graph state''.} Consider the graph in  Figure~\ref{fig:complete_like_graph_state} \citep{Azuma_2015}. It is constructed of a complete graph in the middle and a leaf connected to each vertex in the complete graph. The classical bound in this case is given by \(\beta^{\text{Dense center}}_{\text{C}} = \frac{3}{2}N - 1\). 
As before, if one of the qubits from the complete-graph center is lost, there is no violation. 


Remarkably, for this graph, it is possible to violate the classical limit when several leaves are lost. The maximal number of lost qubits for which the remaining state still violates the classical limit can be calculated as follows. Every lost leaf qubit decreases the original quantum limit by \(\sqrt{2} + 1\), i.e,
\begin{align}
    \left\langle I^{\text{Dense center*}}_{r} \right\rangle_{\rho}= \left\langle I^{\text{Dense center'*}}_{r} \right\rangle_{\rho} = \beta_{\text{Q}}^{\text{Dense center}} -\left(\sqrt{2} + 1\right)\left|\mathcal{L}\right|
     \;.
\end{align}
As a result, the violation is maintained as long as \(\left|\mathcal{L}\right| \leq \frac{\sqrt{2} - 1}{\sqrt{2}} N\) for the induced graph and \(\left|\mathcal{L}\right| \leq \frac{\sqrt{2} - 1}{\sqrt{2} + 1} N\) for the full graph. The results are summarized in Table~\ref{table:conditions_for_violation}. Note that the complete graph in the center can be approximated by some other, lower degree \(k-\text{regular}\) graph. In this case, since \(n_{\text{max}}\) is smaller the amount of lost qubits allowed will be smaller.

Although a smaller number of leaves can be lost in the dense center example compared with the two-centered GHZ example, the dense center graph has a better performance in terms of the number of leaves that can be lost in the worst case with respect to which set of leaves is lost. We remark that these two examples are complementary is some sense.  The difference between the classical and the quantum bound is linear in \(n_{\text{max}}\) and hence the larger \(n_{\text{max}}\) is, the better. In order for a violation to hold when qubits are lost, the graph topology should protect \(n_{\text{max}}\) from changing. In the two-centered GHZ example \(n_\text{max}\) is large by making each root connected to many leaves while in the dense center example this is achieved by the roots being highly connected to one another. 

\begin{table}
\centering
\begin{tabular}{ | P{2cm} | P{4.2cm}| P{4.2cm} | P{4.2cm}| }
 \hline
 Graph & 
 Expectation value of \(\left\langle I^{G*}_{r} \right\rangle_{\rho}, \left\langle I^{G'*}_{r} \right\rangle_{\rho}\) & 
 Maximal number of lost qubits with respect to \(I^{G'}_{r}\) &
 Comments \\ [0.5ex] 
 
 \hline\hline
 Two-centered GHZ & 
 \(\left(2\sqrt{2} - 1\right)n_{\text{max}} + N - 1 -\sqrt{2} - \left|\mathcal{L}\right|\) & 
 \(\left|\mathcal{L}\right| \leq \frac{N}{2} - 1\) &
 Qubits adjacent to only one root can be lost \\
 
 \hline
 Dense center & 
 \(\left(2\sqrt{2} - 1\right)n_{\text{max}} + N-1 -\left(\sqrt{2} + 1\right)\left|\mathcal{L}\right|\)  &
 \(\left|\mathcal{L}\right| \leq \lfloor\frac{\sqrt{2} - 1}{\sqrt{2}}N\rfloor\) &
 Only leaf qubits can be lost \\
 
  \hline
\end{tabular}

\caption{Characterization of the expectation value of the Bell operator in Equation~\eqref{eqn:bell_operator_as_a_sum_of_stabilizing_operators} as a function of number of lost qubits together with the maximal number of qubits which can be lost while maintaining the violation of the classical limit for the two-centered GHZ (for \(1 \leq \left| \mathcal{L} \right| \leq \frac{N}{2} - 1\)) and the dense center (for \(0 \leq \left| \mathcal{L} \right| \leq \frac{N}{2} - 1\)) graphs.}
\label{table:conditions_for_violation}
\end{table}

\subsubsection{Unknown Set of Lost Qubits}
\label{unkown_set_of_lost_qubits_section}
The previous sections discussed situations in which the set of lost qubits is known. This section is devoted to the scenario where this information is unavailable. 
In this case, the density matrix is a convex combination of density matrices in the form of Equation~\eqref{eqn:remaining_density_matrix}, corresponding to all the possibilities of lost qubit sets and their probability. Thus, the expectation value of the Bell operator corresponding to the full graph should be calculated in a similar way as in the previous sections. It is the linear combination of all the expectation values as in Theorem~\ref{expectation_value_theorem} with coefficients according to the probability for each density matrix in the convex sum. 
It is also natural to ask how the expectation value with respect to one of the possible induced graphs behaves in these cases. When the lost qubits were known, it was always better to evaluate the expectation value of the Bell operator of the induced graph. Interestingly, we find that this is no longer clear cut when the set of lost qubits is unknown. This should be taken into account when deciding which measurements to perform.

For example, suppose that in the dense center graph example described in Section~\ref{loss_tolerant_section} and in Figure~\ref{fig:complete_like_graph_state} there is probability \(p\) for a single qubit loss of one of the leaves (without knowing which). The density matrix is a sum of the following density matrices: \(\left(1-p\right)\left|\phi^{G}\right\rangle \left\langle\phi^{G}\right| + \frac{2p}{N}\sum_{\mathcal{L}}\rho_\mathcal{L}\) (where \(\rho_\mathcal{L}\) is the density matrix corresponding to each possibility of lost qubit). For every choice of root there is a density matrix in the sum corresponding to the loss of the leaf adjacent to the chosen root. 
For the full graph Bell operator, as seen before, when this leaf is lost, there is no violation. However, the total state still violates the Bell inequality by a margin which decreases with larger \(p\). In the limit of small \(p\) the violation approaches its maximal possible margin.
On the other hand, choosing to focus on the Bell operator of one of the possible induced graphs, the violation in the limit \(p \rightarrow 0\) is dominated by the case where no qubit is lost, and as shown in Appendix~\ref{appendix:full_graph_with_sub_graph} in Lemma~\ref{full_graph_with_sub_graph_operator}, the violation is not optimal since the expectation value of some stabilizers in the sum vanishes. Generally, this advantage of the full graph Bell inequality should be expected to hold up to some finite value of \(p\). 

\section{Discussion and Open Questions}
In this work we have discussed the robustness of Bell violation of graph states to loss of qubits. We used the stabilizer formalism to derive a lemma which, for any given graph state, provides the exact
decrease in the Bell violation for any subset of lost qubits. Provided a graph state, the results allow to determine how many qubits can be lost while still maintaining the violation. In addition, it allows to identify qubits that are more important and must be kept safe since their loss demolishes the violation. 
In particular, it was shown that the widely studied GHZ and ring
topologies are susceptible to loss of any qubit, while other topologies such as the ``two-centered GHZ'' and ``dense center'' graphs presented in Section~\ref{loss_tolerant_section} are more resilient. 

We find that for no-click events (scenarios in which, e.g., photon detectors are expected to detect a photon but they do not and thus it is known which qubits were lost) considering the Bell operator of the induced sub-graph of the remaining qubits better optimizes the violation. The expectation value of the Bell operator of the full graph and the induced graph in this case turn out to be the same, but the classical bound for the induced sub-graph is lower and hence the violation is more significant. In addition, we've discussed the scenario in which the information of which qubit is lost is unavailable. For these cases, we've demonstrated examples in which it is better to use the Bell operator corresponding to the full graph.  

For future work, the main lemma derived here can be used to optimize graph topologies with respect to different metrics of robustness. The relevant metric should arise from constraints in real experimental setups and depend on use-case specification for the graph state. For example, one could consider maximal loss of qubits where there is still a violation, maximal violation after a specified subset of qubits is lost, and so on. Furthermore, other sources of error such as bit flips and phase errors may be taken into account in the analysis. In addition, the stabilizer formalism approach used here to derive Lemma~\ref{lost_qubits_lemma} can be applied to other Bell inequalities violated by graph states as they are all sums of multiplications of the stabilizing operators.

\section{Acknowledgments}
We would like to thank Ziv Aqua and Barak Dayan for insightful discussions and Thomas Vidick for helpful comments on the manuscript. 
This research was generously supported by the Peter and Patricia Gruber Award, the Daniel E. Koshland Career Development Chair, the Koshland Research Fund and the Minerva foundation with funding from the Federal German Ministry for Education and Research.

\appendix

\section{Proof of Lemma \ref{lost_qubits_lemma}} 

\universe*

\begin{proof}
\label{appendix:full_lemma_proof}
	Express the graph states as follows
	
	\begin{align}
		\label{eqn:rewriting_phi_G}
		\left|\phi^{G}\right\rangle =
		\sum_{s \in \left\{0,1\right\}^{\left|\mathcal{L}\right|}} \alpha_{s}  \left|\psi_{s}\right\rangle_{\left\{1,\dots,N\right\}\setminus\mathcal{L}} \otimes \left|s\right\rangle_{\mathcal{L}}
		 \;,
	\end{align}
	
	where \(\left\{\left|\psi_{s}\right\rangle_{\left\{1,\dots,N\right\}\setminus\mathcal{L}}\right\} _{s \in \left\{0,1\right\}^{\left|\mathcal{L}\right|}}\) are normalized in the remaining \(N - \left|\mathcal{L}\right|\) dimensional Hilbert space. In addition,
	\begin{align}
		\sum_{s \in \left\{ 0,1 \right\}^{\left|\mathcal{L}\right|}} \left|\alpha_{s}\right|^{2}=1 
		 \;,
	\end{align}

	since \(\left|\phi^{G}\right\rangle\) is normalized. Hence 
	
	\begin{align}
		\rho & =\mathrm{Tr}_{\mathcal{L}} \left[ 
		\left(\sum_{s \in \left\{0,1\right\}^{\left|\mathcal{L}\right|}} \alpha_{s}  \left|\psi_{s}\right\rangle_{\left\{1,\dots,N\right\}\setminus\mathcal{L}} \otimes \left|s\right\rangle_{\mathcal{L}} \right)
		\left(\sum_{s \in \left\{0,1\right\}^{\left|\mathcal{L}\right|}} \alpha_{s}^{*}  \left\langle\psi_{s}\right|_{\left\{1,\dots,N\right\}\setminus\mathcal{L}} \otimes \left\langle s \right|_{\mathcal{L}}
		\right) \right]
		\otimes \left(\bigotimes_{l \in \mathcal{L}} \left|0 \right\rangle_{l} \left\langle 0 \right|_{l} \right) \nonumber \\ 
		& =
		\left(
		\sum_{s \in \left\{0,1\right\}^{\left|\mathcal{L}\right|}} \left|\alpha_{s}\right|^{2}  \left|\psi_{s}\right\rangle_{\left\{1,\dots,N\right\}\setminus\mathcal{L}}  \left\langle\psi_{s}\right|_{\left\{1,\dots,N\right\}\setminus\mathcal{L}}\right)
		\otimes \left(\bigotimes_{l \in \mathcal{L}} \left|0 \right\rangle_{l} \left\langle 0 \right|_{l} \right)
		 \;.
	\end{align}
	
	The expectation values in Equation~\eqref{eqn:stablizer_expectation_value} are calculated separately for each  of the following cases
	
	\begin{enumerate}
		\item \(i \notin \bigcup_{l \in \mathcal{L}} \overline{\mathcal{N}_{l}^{G}}\)
		\item \(i\in \left(\bigcup_{l \in \mathcal{L}} \mathcal{N}_{l}^{G} \right)\setminus \mathcal{L}\)
		\item \(i \in \mathcal{L}\) (Relevant only for part a)
	\end{enumerate}
	
	The proof is divided into two parts. In Part A the eigenvalue equation (Equation~\eqref{eqn:eigenvalue_equation}) is used in order to calculate the action of the stabilizers \(S_{i}^{G}\) omitting their support on the lost qubits, on the states \(\left|\psi_{s}\right\rangle\) for all \(i \in \left\{1,\dots,N\right\}, s \in \left\{0,1\right\}^{\left|\mathcal{L}\right|}\). These reduced operators are in fact the stabilizers of the induced graph \(S_{i}^{G\left[V\setminus \right\{ v_{k} \text{ s.t } k\in\mathcal{L} \left\}\right]}\) for \(i\in \left\{1\dots N\right\}\setminus\mathcal{L}\). In Part B the result is applied in order to calculate the expectation value of the stabilizing operators \(S_{i}^{G},S_{i}^{G\left[V\setminus \right\{ v_{k} \text{ s.t } k\in\mathcal{L} \left\}\right]}\). Note that when stabilizers act on states \(\left|\psi_{s}\right\rangle\) in a reduced Hilbert space, it is to be understood that the identity operators in the complementary system are discarded.

	\paragraph{Part A: Calculating the Action of the Stabilizers on \(\left|\psi_{s}\right\rangle\).}
	\label{part_a}
	\begin{enumerate}
		\item For \(i \notin \bigcup_{l \in \mathcal{L}} \overline{\mathcal{N}_{l}^{G}}\): \(S_{i}^{G} = S_{i}^{G\left[V\setminus \right\{ v_{k} \text{ s.t } k\in\mathcal{L} \left\}\right]}\).
		Using the eigenvalue equation
		
		\begin{align}
			S_{i}^{G}\left|\phi^{G}\right\rangle & = 
			\sum_{s \in \left\{0,1\right\}^{\left|\mathcal{L}\right|}} \alpha_{s} S_{i}^{G} \left|\psi_{s}\right\rangle_{\left\{1,\dots,N\right\}\setminus\mathcal{L}} \otimes \left|s\right\rangle_{\mathcal{L}}
			\nonumber\\ & =
			\sum_{s \in \left\{0,1\right\}^{\left|\mathcal{L}\right|}} \alpha_{s} S_{i}^{G\left[V\setminus \right\{ v_{k} \text{ s.t } k\in\mathcal{L} \left\}\right]} \left|\psi_{s}\right\rangle_{\left\{1,\dots,N\right\}\setminus\mathcal{L}} \otimes \left|s\right\rangle_{\mathcal{L}} \nonumber\\ & =
			\sum_{s \in \left\{0,1\right\}^{\left|\mathcal{L}\right|}} \alpha_{s}  \left|\psi_{s}\right\rangle_{\left\{1,\dots,N\right\}\setminus\mathcal{L}} \otimes \left|s\right\rangle_{\mathcal{L}}
			 \;.
		\end{align}
		
		For all \( \left|s\right\rangle_{\mathcal{L}} \), \(s \in \left\{0,1\right\}^{\left|\mathcal{L}\right|}\)
		\begin{align}
			\label{eqn:eigenvalue_equation_non_neighbors}
			S_{i}^{G}  \left|\psi_{s}\right\rangle_{\left\{1,\dots,N\right\}\setminus\mathcal{L}} = 
			S_{i}^{G\left[V\setminus \right\{ v_{k} \text{ s.t } k\in\mathcal{L} \left\}\right]} \left|\psi_{s}\right\rangle_{\left\{1,\dots,N\right\}\setminus\mathcal{L}} = 
			\left|\psi_{s}\right\rangle_{\left\{1,\dots,N\right\}\setminus\mathcal{L}}
			 \;,
		\end{align}
		meaning that \(\left\{\left|\psi_{s}\right\rangle_{\left\{1,\dots,N\right\}\setminus\mathcal{L}}\right\} _{s \in \left\{0,1\right\}^{\left|\mathcal{L}\right|}}\) are eigenstates of \(S_{i}^{G}  = S_{i}^{G\left[V\setminus \right\{ v_{k} \text{ s.t } k\in\mathcal{L} \left\}\right]}\) with eigenvalue \(1\).
		
		\item For  \(i\in \left( \bigcup_{l \in \mathcal{L}} \mathcal{N}_{l}^{G} \right)\setminus \mathcal{L}\) 
		\begin{align}
		    S_{i}^{G} = S_{i}^{G\left[V\setminus \right\{ v_{k} \text{ s.t } k\in\mathcal{L} \left\}\right]} \bigotimes_{l\in \mathcal{N}_{i}^{G} \cap \mathcal{L}} Z_{l}
		     \;.
		\end{align}
		Hence, 
		
		\begin{align}
			S_{i}^{G}\left|\phi^{G}\right\rangle & = S_{i}^{G\left[V\setminus \right\{ v_{k} \text{ s.t } k\in\mathcal{L} \left\}\right]} \bigotimes_{l\in \mathcal{N}_{i}^{G} \cap \mathcal{L}} Z_{l}  \sum_{s \in \left\{0,1\right\}^{\left|\mathcal{L}\right|}} \alpha_{s}  \left|\psi_{s}\right\rangle_{\left\{1,\dots,N\right\}\setminus\mathcal{L}} \otimes \left|s\right\rangle_{\mathcal{L}}  \nonumber \\ 
			& =  \sum_{s \in \left\{0,1\right\}^{\left|\mathcal{L}\right|}} \alpha_{s} \left( S_{i}^{G\left[V\setminus \right\{ v_{k} \text{ s.t } k\in\mathcal{L} \left\}\right]} \left|\psi_{s}\right\rangle_{\left\{1,\dots,N\right\}\setminus\mathcal{L}} \right) \otimes \left( \bigotimes_{l\in \mathcal{N}_{i}^{G} \cap \mathcal{L}} Z_{l} \left|s\right\rangle_{\mathcal{L}}  \right)  \nonumber \\
			& = \sum_{s \in \left\{0,1\right\}^{\left|\mathcal{L}\right|}} \alpha_{s} \left( S_{i}^{G\left[V\setminus \right\{ v_{k} \text{ s.t } k\in\mathcal{L} \left\}\right]} \left|\psi_{s}\right\rangle_{\left\{1,\dots,N\right\}\setminus\mathcal{L}} \right) \otimes \left[ \bigotimes_{l\in \mathcal{N}_{i}^{G} \cap \mathcal{L}} Z_{l} \left( \bigotimes_{l'\in\mathcal{L}} \left|s_{l'}\right\rangle_{l'} \right) \right]  
			\nonumber \\ 
			& = \sum_{s \in \left\{0,1\right\}^{\left|\mathcal{L}\right|}} \alpha_{s} \left( S_{i}^{G\left[V\setminus \right\{ v_{k} \text{ s.t } k\in\mathcal{L} \left\}\right]} \left|\psi_{s}\right\rangle_{\left\{1,\dots,N\right\}\setminus\mathcal{L}} \right) \otimes \left( \left(-1\right)^{\sum_{l\in\mathcal{L} \cap \mathcal{N}_{i}^{G}} s_{l}} \left|s\right\rangle_{\mathcal{L}}  \right) 
			\nonumber \\ & =
			\sum_{s \in \left\{0,1\right\}^{\left|\mathcal{L}\right|}} \alpha_{s}  \left|\psi_{s}\right\rangle_{\left\{1,\dots,N\right\}\setminus\mathcal{L}} \otimes \left|s\right\rangle_{\mathcal{L}}
			 \;,
		\end{align}
		where in the third step \( \left|s\right\rangle = \bigotimes_{l'\in\mathcal{L}} \left|s_{l'}\right\rangle_{l'}\). 
		For all \( \left|s\right\rangle_{\mathcal{L}} \), \(s \in \left\{0,1\right\}^{\left|\mathcal{L}\right|}\)
		
		\begin{align}
			\label{eigenvalue_equation_neighbors}
			S_{i}^{G\left[V\setminus \right\{ v_{k} \text{ s.t } k\in\mathcal{L} \left\}\right]} \left|\psi_{s}\right\rangle_{\left\{1,\dots,N\right\}\setminus\mathcal{L}} = \left(-1\right)^{\sum_{l\in\mathcal{L} \cap \mathcal{N}_{i}^{G}} s_{l}} \left|\psi_{s}\right\rangle_{\left\{1,\dots,N\right\}\setminus\mathcal{L}}
			 \;,
		\end{align}
		meaning that \(\left|\psi_{s}\right\rangle_{\left\{1,\dots,N\right\}\setminus\mathcal{L}}\) are eigenstates of \(S_{i}^{G\left[V\setminus \right\{ v_{k} \text{ s.t } k\in\mathcal{L} \left\}\right]}\) with eigenvalues \(-1,1\). The eigenvalue depends on the parity of the number of \(1\)'s in the string \(s \in \left\{ 0,1 \right\}^{\left|\mathcal{L}\right|}\) in indices \( l \in \mathcal{L} \cap \mathcal{N}_{i}^{G}\). 
		
		The number of states \(\left|\psi_{s}\right\rangle\) with eigenvalue \(1\) is equal to the number of states with eigenvalue \(-1\). This can easily be shown by proving a one-to-one correspondence between these two sets. Choose \(l' \in \mathcal{L} \cap \mathcal{N}_{i}^{G}\) (it is possible since \(\mathcal{L} \cap \mathcal{N}_{i}^{G}\) is not empty). Notice that \(s = s_{l_{0}},\dots,s_{l'},\dots ,s_{l_{\left|\mathcal{L}\right|}}\) and \(s = s_{l_{0}},\dots,\bar{s_{l'}},\dots s_{l_{\left|\mathcal{L}\right|}}\) have opposite parity and therefore have opposite signs of their eigenvalues with respect to \(S_{i}^{G\left[V\setminus \right\{ v_{k} \text{ s.t } k\in\mathcal{L} \left\}\right]}\).
		
		\item For \(i \in \mathcal{L}\) using the eigenvalue equation as before
		
		\begin{align}
			S_{i}^{G} \left|\phi^{G}\right\rangle & = X_{i}\bigotimes_{j\in \mathcal{N}_{i}^{G} \setminus \mathcal{L}}Z_{j} \bigotimes_{l\in \mathcal{N}_{i}^{G} \cap \mathcal{L}} Z_{l} \sum_{s \in \left\{0,1\right\}^{\left|\mathcal{L}\right|}} \alpha_{s}  \left|\psi_{s}\right\rangle_{\left\{1,\dots,N\right\}\setminus\mathcal{L}} \otimes \left|s\right\rangle_{\mathcal{L}} \nonumber \\
			& = 
			\sum_{s \in \left\{0,1\right\}^{\left|\mathcal{L}\right|}} \alpha_{s} \left( \bigotimes_{j\in \mathcal{N}_{i}^{G} \setminus \mathcal{L}} Z_{j} \left|\psi_{s}\right\rangle_{\left\{1,\dots,N\right\}\setminus\mathcal{L}} \right) \otimes \left( X_{i} \otimes \bigotimes_{l\in \mathcal{N}_{i}^{G} \cap \mathcal{L}} Z_{l} \left|s\right\rangle_{\mathcal{L}} \right)  \nonumber \\
			& = 
			\sum_{s \in \left\{0,1\right\}^{\left|\mathcal{L}\right|}} \alpha_{s} \left( \bigotimes_{j\in \mathcal{N}_{i}^{G} \setminus \mathcal{L}} Z_{j} \left|\psi_{s}\right\rangle_{\left\{1,\dots,N\right\}\setminus\mathcal{L}} \right) \otimes \left[ X_{i} \otimes \bigotimes_{l\in \mathcal{N}_{i}^{G} \cap \mathcal{L}} Z_{l} \left(\bigotimes_{l'\in\mathcal{L}} \left|s_{l'}\right\rangle_{l'}\right) \right]  \nonumber \\
			& = 
			\sum_{s \in \left\{0,1\right\}^{\left|\mathcal{L}\right|}} \alpha_{s} \left( \bigotimes_{j\in \mathcal{N}_{i}^{G} \setminus \mathcal{L}} Z_{j} \left|\psi_{s}\right\rangle_{\left\{1,\dots,N\right\}\setminus\mathcal{L}} \right) \otimes \left[ \left(X_{i}\left|s_{i}\right\rangle_{i} \right) \otimes \bigotimes_{l\in \mathcal{N}_{i}^{G} \cap \mathcal{L}} Z_{l} \left(\bigotimes_{l'\in\mathcal{L} \setminus \left\{i\right\}} \left|s_{l'}\right\rangle_{l'}\right) \right]   \nonumber \\
			& = 
			\sum_{s \in \left\{0,1\right\}^{\left|\mathcal{L}\right|}} \alpha_{s} \left( \bigotimes_{j\in \mathcal{N}_{i}^{G} \setminus \mathcal{L}} Z_{j} \left|\psi_{s}\right\rangle_{\left\{1,\dots,N\right\}\setminus\mathcal{L}} \right) \otimes \left[\left(-1\right)^{\sum_{l\in\mathcal{L} \cap \mathcal{N}_{i}^{G}} s_{l}}
			\left|\bar{s_{i}}\right\rangle_{i}  \otimes \left(\bigotimes_{l'\in\mathcal{L} \setminus \left\{i\right\}} \left|s_{l'}\right\rangle_{l'}\right) \right]   \nonumber \\  
			& = \sum_{s \in \left\{0,1\right\}^{\left|\mathcal{L}\right|}} \alpha_{s}  \left|\psi_{s}\right\rangle_{\left\{1,\dots,N\right\}\setminus\mathcal{L}} \otimes \left|s\right\rangle_{\mathcal{L}}
			 \;.
		\end{align}
		
		Again, in the third step \( \left|s\right\rangle = \bigotimes_{l'\in\mathcal{L}} \left|s_{l'}\right\rangle_{l'}\).
		For all \( \left|s\right\rangle_{\mathcal{L}} \), \(s \in \left\{0,1\right\}^{\left|\mathcal{L}\right|}, s = s_{l_{0}},\dots,s_{l_{i}},\dots ,s_{l_{\left|\mathcal{L}\right|}}\)
		
		\begin{align}
			\label{eqn:equal_weight_property}
			\alpha_{s_{l_{0}},\dots,s_{l_{i}},\dots ,s_{l_{\left|\mathcal{L}\right|}}} \bigotimes_{j\in \mathcal{N}_{i}^{G} \setminus \mathcal{L}} Z_{j} \left|\psi_{s_{l_{0}},\dots,s_{l_{i}},\dots ,s_{l_{\left|\mathcal{L}\right|}}}\right\rangle_{\left\{1,\dots,N\right\}\setminus\mathcal{L}} = \\ \left(-1\right)^{\sum_{l\in\mathcal{L} \cap \mathcal{N}_{i}^{G}} s_{l}} \alpha_{s_{l_{0}},\dots,\bar{s_{l_{i}}},\dots ,s_{l_{\left|\mathcal{L}\right|}}}\left|\psi_{s_{l_{1}},\dots\bar{s_{l_{i}}},\dots_{l_{\left|\mathcal{L}\right|}}}\right\rangle_{\left\{1,\dots,N\right\}\setminus\mathcal{L}}
			 \;.
		\end{align}
		
		Notice that for all \(i \in \mathcal{L}\)
		\begin{align}
		    \left|\alpha_{s_{l_{0}},\dots,s_{l_{i}},\dots ,s_{l_{\left|\mathcal{L}\right|}}}\right|= \left|\alpha_{s_{l_{0}},\dots,\bar{s_{l_{i}}},\dots,s_{l_{\left|\mathcal{L}\right|}}}\right|
		    \;,
		\end{align}
		since the operator \(\bigotimes_{j\in \mathcal{N}_{i}^{G} \setminus \mathcal{L}} Z_{j}\) is unitary, meaning that the states remain normalized after it is applied.
		
		Hence, for all \(s,s' \in \left\{0,1\right\}^{\left|\mathcal{L}\right|}\)
		\begin{align}
		\label{eqn:coefficient_equality}
		    \left|\alpha_{s}\right| =\left|\alpha_{s'}\right|
		     \;,
		\end{align} 
		since \(s,s'\) differ from one another by flipping a finite set of indices, and one can create a series of equalities  transforming from \(\left|\alpha_{s}\right|\) to  \(\left|\alpha_{s'}\right|\) by flipping one index at a time while maintaining the equality.
	\end{enumerate}
	
	\paragraph{Part B: Calculating the Expectation Values.}
	\begin{enumerate}
		\item For \(i \notin \bigcup_{l \in \mathcal{L}} \overline{\mathcal{N}_{l}^{G}}\), using the eigenvalue equation obtained in Equation~\eqref{eqn:eigenvalue_equation_non_neighbors}
		
		\begin{align}
			\left\langle S_{i}^{G} \right\rangle_{\rho} & = 
			\mathrm{Tr}\left[ \left( S_{i}^{G}    
			\sum_{s \in \left\{0,1\right\}^{\left|\mathcal{L}\right|}} \left|\alpha_{s}\right|^{2}  \left|\psi_{s}\right\rangle_{\left\{1,\dots,N\right\}\setminus\mathcal{L}}  \left\langle\psi_{s}\right|_{\left\{1,\dots,N\right\}\setminus\mathcal{L}}\right)
			\otimes \left( \bigotimes_{l \in \mathcal{L}} \left|0 \right\rangle_{l} \left\langle 0 \right|_{l} \right) \right] \nonumber\\ & = 
			\mathrm{Tr}_{\left\{ 1,\dots,N\right\} \setminus \mathcal{L}}\left[
			\sum_{s \in \left\{0,1\right\}^{\left|\mathcal{L}\right|}} \left|\alpha_{s}\right|^{2}  S_{i}^{G} \left|\psi_{s}\right\rangle_{\left\{1,\dots,N\right\}\setminus\mathcal{L}}  \left\langle\psi_{s}\right|_{\left\{1,\dots,N\right\}\setminus\mathcal{L}}\right] \nonumber \\ 
			& =
			\mathrm{Tr}_{\left\{ 1,\dots,N\right\} \setminus \mathcal{L}}\left[
			\sum_{s \in \left\{0,1\right\}^{\left|\mathcal{L}\right|}} \left|\alpha_{s}\right|^{2} \left|\psi_{s}\right\rangle_{\left\{1,\dots,N\right\}\setminus\mathcal{L}}  \left\langle\psi_{s}\right|_{\left\{1,\dots,N\right\}\setminus\mathcal{L}}\right] \nonumber \\  & = 1
			 \;.
		\end{align}
		
		In the third step the qubits in \(\mathcal{L}\) are traced out, in the fourth step Equation~\eqref{eqn:eigenvalue_equation_non_neighbors} is used, and the last step is correct due to the normalization of \(\left|\phi^{G}\right\rangle\)  as in Equation~\eqref{eqn:rewriting_phi_G}. The exact same calculation holds for \(S_{i}^{G\left[V\setminus \right\{ v_{k} \text{ s.t } k\in\mathcal{L} \left\}\right]} \).
		
		\item For \(i\in \left( \bigcup_{l \in \mathcal{L}} \mathcal{N}_{l}^{G} \right)\setminus \mathcal{L}\)
		\begin{align}
			\left\langle S_{i}^{G} \right\rangle_{\rho} 
			& = 
			\left\langle  S_{i}^{G\left[V\setminus \right\{ v_{k} \text{ s.t } k\in\mathcal{L} \left\}\right]} \bigotimes_{l\in \mathcal{N}_{i}^{G} \cap \mathcal{L}} Z_{l} \right\rangle 
			\nonumber \\ & = 
			\mathrm{Tr} \left[ S_{i}^{G\left[V\setminus \right\{ v_{k} \text{ s.t } k\in\mathcal{L} \left\}\right]} \bigotimes_{l\in \mathcal{N}_{i}^{G} \cap \mathcal{L}} Z_{l} \left(\sum_{s \in \left\{0,1\right\}^{\left|\mathcal{L}\right|}} \left|\alpha_{s}\right|^{2}  \left|\psi_{s}\right\rangle_{\left\{1,\dots,N\right\}\setminus\mathcal{L}}  \left\langle\psi_{s}\right|_{\left\{1,\dots,N\right\}\setminus\mathcal{L}}\right)
			\otimes \left( \bigotimes_{l \in \mathcal{L}} \left|0 \right\rangle_{l} \left\langle 0 \right|_{l} \right) \right] \nonumber \\ 
			& = 
			\mathrm{Tr} \left[ \left( \sum_{s \in \left\{0,1\right\}^{\left|\mathcal{L}\right|}} \left|\alpha_{s}\right|^{2}  S_{i}^{G\left[V\setminus \right\{ v_{k} \text{ s.t } k\in\mathcal{L} \left\}\right]} \left|\psi_{s}\right\rangle_{\left\{1,\dots,N\right\}\setminus\mathcal{L}}  \left\langle\psi_{s}\right|_{\left\{1,\dots,N\right\}\setminus\mathcal{L}} \right)
			\otimes \left(\bigotimes_{l \in \mathcal{N}_{i}^{G} \cap \mathcal{L}} Z_{l} \bigotimes_{l \in \mathcal{L}} \left|0 \right\rangle_{l} \left\langle 0 \right|_{l} \right) \right] \nonumber \\
			& = 
			\mathrm{Tr}_{\left\{1,\dots,N\right\}\setminus\mathcal{L}} \left[\sum_{s \in \left\{0,1\right\}^{\left|\mathcal{L}\right|}} \left|\alpha_{s}\right|^{2}  \left(-1\right)^{\sum_{l\in\mathcal{L} \cap \mathcal{N}_{i}^{G}} s_{l}} \left|\psi_{s}\right\rangle_{\left\{1,\dots,N\right\}\setminus\mathcal{L}}  \left\langle\psi_{s}\right|_{\left\{1,\dots,N\right\}\setminus\mathcal{L}} \right] \nonumber \\
			& = 
			\sum_{s \in \left\{0,1\right\}^{\left|\mathcal{L}\right|}} \left(-1\right)^{\sum_{l\in\mathcal{L} \cap \mathcal{N}_{i}^{G}} s_{l}} \left|\alpha_{s}\right|^{2}
			 \;.
		\end{align}
		
		In the fourth step Equation~\eqref{eigenvalue_equation_neighbors} is used as well as the notation \(\left|s\right\rangle = \bigotimes_{l'\in\mathcal{L}} \left|s_{l'}\right\rangle_{l'}\).
	    In Part A for \(i\in \left( \bigcup_{l \in \mathcal{L}} \mathcal{N}_{l}^{G} \right)\setminus \mathcal{L}\), it was shown that there is a one-to-one correspondence between the eigenstates \(\left|\psi_{s}\right\rangle\) with eigenvalue \(1\) and the eigenstates with eigenvalue \(-1\). In addition, in Equation~\eqref{eqn:coefficient_equality} it was shown that the coefficients \(\alpha_{s}\) for all \(s\in\left\{0,1\right\}^{\left|\mathcal{L}\right|}\) have the same absolute value. As a consequence, the sum above vanishes.
		
		The exact same calculation holds for \(S_{i}^{G\left[V\setminus \right\{ v_{k} \text{ s.t } k\in\mathcal{L} \left\}\right]}\) discarding the \(\bigotimes_{l\in \mathcal{N}_{i}^{G} \cap \mathcal{L}} Z_{l}\) part in the calculation.
		
		\item  For \(i\in\mathcal{L}\)
		\begin{align}
			\left\langle S_{i}^{G} \right\rangle_\rho & = \left\langle X_{i} \bigotimes_{j\in \mathcal{N}_{i}^{G} \setminus \mathcal{L}} Z_{j} \bigotimes_{l\in \mathcal{N}_{i}^{G} \cap \mathcal{L}} Z_{l} \right\rangle \nonumber \\ 
			& =
			\mathrm{Tr} \left[ X_{i} \bigotimes_{j\in \mathcal{N}_{i}^{G} \setminus \mathcal{L}} Z_{j} \bigotimes_{l\in \mathcal{N}_{i}^{G} \cap \mathcal{L}} Z_{l} \left(
			\sum_{s \in \left\{0,1\right\}^{\left|\mathcal{L}\right|}} \left|\alpha_{s}\right|^{2}  \left|\psi_{s}\right\rangle_{\left\{1,\dots,N\right\}\setminus\mathcal{L}}  \left\langle\psi_{s}\right|_{\left\{1,\dots,N\right\}\setminus\mathcal{L}} \right)
			\otimes \left( \bigotimes_{l \in \mathcal{L}}  \left|0 \right\rangle_{l} \left\langle 0 \right|_{l}\right) \right] \nonumber \\ & = 
			\mathrm{Tr} \left[\left(\bigotimes_{j\in \mathcal{N}_{i}^{G} \setminus \mathcal{L}} Z_{j} \sum_{s \in \left\{0,1\right\}^{\left|\mathcal{L}\right|}} \left|\alpha_{s}\right|^{2}  \left|\psi_{s}\right\rangle_{\left\{1,\dots,N\right\}\setminus\mathcal{L}}  \left\langle\psi_{s}\right|_{\left\{1,\dots,N\right\}\setminus\mathcal{L}}\right)
			\otimes \left(\bigotimes_{l\in \mathcal{N}_{i}^{G} \cap \mathcal{L}} Z_{l}\bigotimes_{l \in \mathcal{L} \setminus \left\{i\right\}} \left|0 \right\rangle_{l} \left\langle 0 \right|_{l} \right) \otimes \left( X_{i}  \left|0 \right\rangle_{i} \left\langle 0 \right|_{i}\right)\right] \nonumber \\ 
			& =
			\mathrm{Tr} \left[\left(\bigotimes_{j\in \mathcal{N}_{i}^{G} \setminus \mathcal{L}} Z_{j} \sum_{s \in \left\{0,1\right\}^{\left|\mathcal{L}\right|}} \left|\alpha_{s}\right|^{2}  \left|\psi_{s}\right\rangle_{\left\{1,\dots,N\right\}\setminus\mathcal{L}}  \left\langle\psi_{s}\right|_{\left\{1,\dots,N\right\}\setminus\mathcal{L}}\right)
			\otimes \left(\bigotimes_{l\in \mathcal{N}_{i}^{G} \cap \mathcal{L}} Z_{l}\bigotimes_{l \in \mathcal{L}\setminus \left\{i\right\}} \left|0 \right\rangle_{l} \left\langle 0 \right|_{l} \right) \otimes \left|1 \right\rangle_{i} \left\langle 0 \right|_{i}\right] \nonumber \\
			& = 0
			 \;.
		\end{align}
		
		The expression vanishes due to the partial trace on the \(i\)'th qubit. \qedhere
	\end{enumerate}
\end{proof}

\section{Expectation Value of an Induced Graph Stabilizer on the Original State}
\label{appendix:full_graph_with_sub_graph}

\begin{lemma}
\label{full_graph_with_sub_graph_operator}
Let \(G, \mathcal{L}\) be as in Theorem~\ref{expectation_value_theorem}, \(\left|\phi^{G}\right\rangle\) the graph state corresponding to \(G\) and 
\(S_{i}^{G}\) as in Lemma~\ref{lost_qubits_lemma}. Then,
\begin{align}
\nonumber
    \left\langle S^{G\left[ V\setminus\left\{ v_{k} \text{ s.t } k\in\mathcal{L}\right\}\right]}_{i} \right\rangle_{\phi^{G}}  = \begin{cases}
        0 & i \in \bigcup_{l \in \mathcal{L}} \mathcal{N}_{l}^{G} \\
        1 & \text{otherwise}
      \end{cases}
\end{align}
\end{lemma}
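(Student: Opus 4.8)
The plan is to relate the induced-graph stabilizer to the full-graph stabilizer and then exploit the eigenvalue equation $S_i^G\left|\phi^G\right\rangle=\left|\phi^G\right\rangle$. Write $G'=G\left[V\setminus\left\{v_k\text{ s.t }k\in\mathcal{L}\right\}\right]$ for brevity, and assume $i\notin\mathcal{L}$ (which is implicit, since otherwise $S_i^{G'}$ is not defined). Because the neighbourhood of $v_i$ in the induced graph is $\mathcal{N}_i^{G'}=\mathcal{N}_i^G\setminus\mathcal{L}$, padding the induced stabilizer with identities on the lost qubits gives
\begin{align}
S_i^{G'}=X_i\bigotimes_{j\in\mathcal{N}_i^G\setminus\mathcal{L}}Z_j=S_i^G\bigotimes_{l\in\mathcal{N}_i^G\cap\mathcal{L}}Z_l\;,
\end{align}
where the second equality uses $Z_l^2=\mathbb{I}$ together with the fact that these Pauli operators have pairwise disjoint (hence commuting) supports. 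Taking the expectation value in $\left|\phi^G\right\rangle$ and using that $S_i^G$ is Hermitian and $S_i^G\left|\phi^G\right\rangle=\left|\phi^G\right\rangle$ then yields
\begin{align}
\left\langle S_i^{G'}\right\rangle_{\phi^G}=\langle\phi^G|\bigotimes_{l\in\mathcal{N}_i^G\cap\mathcal{L}}Z_l|\phi^G\rangle\;,
\end{align}
so the whole statement reduces to evaluating the expectation value of a Pauli-$Z$ string supported on $\mathcal{N}_i^G\cap\mathcal{L}$.

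For the ``otherwise'' branch I would note that $i\notin\bigcup_{l\in\mathcal{L}}\mathcal{N}_l^G$ is equivalent to $\mathcal{N}_i^G\cap\mathcal{L}=\emptyset$, since $i\in\mathcal{N}_l^G\iff l\in\mathcal{N}_i^G$; then the $Z$-string is the identity and the expectation value equals $\langle\phi^G|\phi^G\rangle=1$. For the first branch, $i\in\bigcup_{l\in\mathcal{L}}\mathcal{N}_l^G$ forces $\mathcal{N}_i^G\cap\mathcal{L}\neq\emptyset$, so I may fix some $l_0\in\mathcal{N}_i^G\cap\mathcal{L}$. The key claim is that $\bigotimes_{l\in\mathcal{N}_i^G\cap\mathcal{L}}Z_l$ anticommutes with $S_{l_0}^G=X_{l_0}\bigotimes_{j\in\mathcal{N}_{l_0}^G}Z_j$: comparing the two Pauli strings site by site, the only site at which they fail to commute is $l_0$, where $X_{l_0}$ meets $Z_{l_0}$ (note $l_0\notin\mathcal{N}_{l_0}^G$ while $l_0\in\mathcal{N}_i^G\cap\mathcal{L}$); at every other site the two strings carry $Z$ against $Z$ or $Z$ against $\mathbb{I}$, which commute. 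Inserting $S_{l_0}^G\left|\phi^G\right\rangle=\left|\phi^G\right\rangle$ on both sides of the inner product then gives
\begin{align}
\langle\phi^G|\bigotimes_{l\in\mathcal{N}_i^G\cap\mathcal{L}}Z_l|\phi^G\rangle=\langle\phi^G|S_{l_0}^G\bigl(\bigotimes_{l\in\mathcal{N}_i^G\cap\mathcal{L}}Z_l\bigr)S_{l_0}^G|\phi^G\rangle=-\langle\phi^G|\bigotimes_{l\in\mathcal{N}_i^G\cap\mathcal{L}}Z_l|\phi^G\rangle\;,
\end{align}
so the expectation value vanishes, which completes the argument.

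I do not expect a real obstacle in this proof: the only point that needs a little care is the site-by-site bookkeeping in the anticommutation check (and the harmless observation $i\notin\mathcal{L}$, needed only so that $S_i^{G'}$ makes sense). Alternatively, one can read the statement off the computation in the proof of Lemma~\ref{lost_qubits_lemma} (Part~A, case~2, together with the equal-weight property of the coefficients $\alpha_s$ and the one-to-one pairing argument there), applied with the roles of $G$ and $G'$ interchanged, since the relevant $Z$-string is exactly $\bigotimes_{l\in\mathcal{N}_i^G\cap\mathcal{L}}Z_l$ and the same cancellation occurs; but the stabilizer-anticommutation argument above is shorter and self-contained.
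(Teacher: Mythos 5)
Your proof is correct, and it takes a genuinely different route from the paper's. The paper proves this lemma by reusing the machinery built for Lemma~\ref{lost_qubits_lemma}: it expands \(\left|\phi^{G}\right\rangle\) as \(\sum_{s}\alpha_{s}\left|\psi_{s}\right\rangle\otimes\left|s\right\rangle_{\mathcal{L}}\), invokes the eigenvalue relation \(S_{i}^{G'}\left|\psi_{s}\right\rangle=\left(-1\right)^{\sum_{l\in\mathcal{L}\cap\mathcal{N}_{i}^{G}}s_{l}}\left|\psi_{s}\right\rangle\) from Part~A, and then kills the resulting signed sum \(\sum_{s}\left(-1\right)^{\sum_{l}s_{l}}\left|\alpha_{s}\right|^{2}\) via the equal-weight property of the \(\alpha_{s}\) and the parity-pairing argument — exactly the alternative you sketch in your closing remark. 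Your main argument instead reduces everything to \(\left\langle\phi^{G}\right|\bigotimes_{l\in\mathcal{N}_{i}^{G}\cap\mathcal{L}}Z_{l}\left|\phi^{G}\right\rangle\) via the identity \(S_{i}^{G'}=S_{i}^{G}\bigotimes_{l\in\mathcal{N}_{i}^{G}\cap\mathcal{L}}Z_{l}\) and then applies the standard stabilizer fact that a Pauli string anticommuting with a stabilizer (here \(S_{l_{0}}^{G}\), which meets the \(Z\)-string only at the single site \(l_{0}\) where \(X\) faces \(Z\)) has vanishing expectation in the stabilized state. This is shorter, self-contained, and avoids the coefficient bookkeeping entirely; the paper's route has the advantage of uniformity, since the same decomposition and cancellation argument simultaneously handles the traced-out state \(\rho\) of Lemma~\ref{lost_qubits_lemma}, where your anticommutation trick would not apply directly because \(\rho\) is no longer stabilized by \(S_{l_{0}}^{G}\). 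Your side observations — that \(i\notin\mathcal{L}\) is implicitly required for \(S_{i}^{G'}\) to be defined, and that \(i\in\bigcup_{l\in\mathcal{L}}\mathcal{N}_{l}^{G}\iff\mathcal{N}_{i}^{G}\cap\mathcal{L}\neq\emptyset\) — are both accurate and consistent with the paper's implicit conventions.
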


\begin{proof}
The proof is very similar to the proof of Lemma~\ref{lost_qubits_lemma}.
Write \(\left|\phi^{G}\right\rangle\) as in Equation~\eqref{eqn:rewriting_phi_G} and have, 
    
\begin{align}
    \left\langle S^{G\left[ V\setminus\left\{ v_{k} \text{ s.t } k\in\mathcal{L}\right\}\right]}_{i} \right\rangle_{\phi^{G}} 
     & = 
     \left\langle \phi^{G} \right| S^{G\left[ V\setminus\left\{ v_{k} \text{ s.t } k\in\mathcal{L}\right\}\right]}_{i} \left|\phi^{G}\right\rangle 
     \nonumber \\ & =
     \left(\sum_{s' \in \left\{0,1\right\}^{\left|\mathcal{L}\right|}} \alpha_{s'}^{*}  \left\langle\psi_{s'}\right|_{\left\{1,\dots,N\right\}\setminus\mathcal{L}} \otimes \left\langle s' \right|_{\mathcal{L}} \right) S_{i}^{G\left[V\setminus \right\{ v_{k} \text{ s.t } k\in\mathcal{L} \left\}\right]} \left( \sum_{s \in \left\{0,1\right\}^{\left|\mathcal{L}\right|}} \alpha_{s}  \left|\psi_{s}\right\rangle_{\left\{1,\dots,N\right\}\setminus\mathcal{L}} \otimes \left|s\right\rangle_{\mathcal{L}} \right) 
     \nonumber \\ & = 
     \left(\sum_{s' \in \left\{0,1\right\}^{\left|\mathcal{L}\right|}} \alpha_{s'}^{*}  \left\langle\psi_{s'}\right|_{\left\{1,\dots,N\right\}\setminus\mathcal{L}} \otimes \left\langle s' \right|_{\mathcal{L}} \right)  \left( \sum_{s \in \left\{0,1\right\}^{\left|\mathcal{L}\right|}} \alpha_{s}  \left(-1\right)^{\sum_{l\in\mathcal{L} \cap \mathcal{N}_{i}^{G}} s_{l}} \left|\psi_{s}\right\rangle_{\left\{1,\dots,N\right\}\setminus\mathcal{L}} \otimes \left|s\right\rangle_{\mathcal{L}} \right) \nonumber \\ & = 	
     \sum_{s \in \left\{0,1\right\}^{\left|\mathcal{L}\right|}} \left(-1\right)^{\sum_{l\in\mathcal{L} \cap \mathcal{N}_{i}^{G}} s_{l}} \left|\alpha_{s}\right|^{2}
     \;.
\end{align}   

In the third step we used Equation~\eqref{eigenvalue_equation_neighbors}. As in the proof of Lemma~\eqref{lost_qubits_lemma} this sum vanishes when \(\mathcal{L} \cap \mathcal{N}_{i}^{G} \neq \emptyset\).
\end{proof}

\bibliography{references}
\bibliographystyle{unsrt}

\end{document}